\tikzset{
    font={\fontsize{9}{11.0476pt}\selectfont}}
\pgfplotsset{compat=newest}
\newtheorem{prop}{Proposition}
\begin{document}

\title{User Activity Detection with Delay-Calibration for Asynchronous Massive Random Access}

\author{Zhichao~Shao,~\IEEEmembership{Member,~IEEE},~Xiaojun~Yuan,~\IEEEmembership{Senior~Member,~IEEE},~Rodrigo~C.~de~Lamare,~\IEEEmembership{Senior~Member,~IEEE},~and~Yong~Zhang
\thanks{This work was supported in part by the National Key Research and Development Program of China under Grant 2021YFB2900404.
	
	Z. Shao is with the Yangtze Delta Region Institute (Quzhou), University of Electronic Science and Technology of China, Quzhou 324003, China (e-mail: zhichao.shao@csj.uestc.edu.cn). X. Yuan and Y. Zhang are with the National Key Laboratory of Wireless Communications, University of Electronic Science and Technology of China, Chengdu 611731, China (e-mail: \{xjyuan,zhang.yong\}@uestc.edu.cn). Rodrigo C. de Lamare is with the Department of Electrical Engineering (DEE), Pontifical Catholic University of Rio de Janeiro, Rio de Janeiro 22451-900, Brazil (e-mail: delamare@puc-rio.br). This work was presented in part at the 2023 IEEE International Symposium on Information Theory, Taipei, Taiwan \cite{10207000}.}}

\maketitle

\pagenumbering{arabic}

\begin{abstract}
	This work considers an uplink asynchronous massive random access scenario in which a large number of users asynchronously access a base station equipped with multiple receive antennas. The objective is to alleviate the problem of massive collision due to the limited number of orthogonal preambles of an access scheme in which user activity detection is performed. We propose a user activity detection with delay-calibration (UAD-DC) algorithm and investigate the benefits of oversampling for the estimation of continuous time delays at the receiver. The proposed algorithm iteratively estimates time delays and detects active users by noting that the collided users can be identified through accurate estimation of time delays. Due to the sporadic user activity patterns, the user activity detection problem can be formulated as a compressive sensing (CS) problem, which can be solved by a modified Turbo-CS algorithm under the consideration of correlated noise samples resulting from oversampling. A sliding-window technique is applied in the proposed algorithm to reduce the overall computational complexity. Moreover, we propose a new design of the pulse shaping filter by minimizing the Bayesian Cram\'er-Rao bound of the detection problem under the constraint of limited spectral bandwidth. Numerical results demonstrate the efficacy of the proposed algorithm in terms of the normalized mean squared error of the estimated channel, the probability of misdetection and the successful detection ratio.
\end{abstract}

\begin{IEEEkeywords}
Asynchronous massive random access, user activity detection, oversampling, shaping filter optimization
\end{IEEEkeywords}

\section{Introduction}
\IEEEPARstart{I}{n} recent years, the widespread applications of internet of things (IoT) have experienced an explosive growth in a variety of fields \cite{6714496}, such as healthcare services \cite{6488244}, transportation and logistics \cite{ATZORI20102787}. It has been predicted that over 75.4 billion devices will be linked to the internet all over the world by 2025 \cite{9205230}, and this number will continue to increase over the next decade. This inspires the intensive study of massive random access (RA). In RA, each device needs to establish a connection with the base station (BS) before data transmission. In massive RA, a BS is required to provide connectivity to a huge number of devices, where the activity pattern of devices is typically sporadic \cite{6916986,7852531,6525600}, i.e., devices are kept in a sleep mode to save energy and only a random subset of devices are activated when triggered by external events \cite{7565189}. Since the messages for connection from different devices reach the BS at different moments, the BS needs to dynamically detect active devices and estimate their time delays for the subsequent data transmission phase.

In the existing communication systems, such as the fourth-generation Long-Term Evolution (LTE) \cite{3gpp} and the fifth-generation New Radio (NR) \cite{3gpp3}, grant-based RA is carried out based on the following four steps. In the first step, an active user transmits a randomly selected preamble over the physical random access channel (PRACH). Due to limited coordination between the BS and active users at this very early stage of communications, the transmitted preamble signals asynchronously arrive at the BS. In the second step, the BS detects the preambles transmitted by active users, determines their time delays through cross-correlation with local preambles, and sends responses to active users. Once an active user is connected to the BS successfully, it transmits data packets in the third step over the physical uplink shared channel (PUSCH). In the last step, the BS sends contention resolution notifications to active users after successfully decoding their data.

However, the above conventional RA technique will meet difficulty when applied to the scenario of massive RA. Due to the limited number of orthogonal preamble sequences, the probability of multiple active users selecting a common sequence (referred to as collisions) may increase sharply, since the number of simultaneously active users is large in massive RA. The cross-correlation method mentioned above cannot precisely detect collided active users and thus suffers from a high probability of misdetection of active users. Confronted with this problem, existing solutions are briefed as follows.

In grant-based RA, several improved ALOHA-based protocols \cite{4155680,6847724,7302046,8057803,5963225,905024,6324672} have been proposed to mitigate collisions in the literature. The contention resolution diversity slotted ALOHA \cite{4155680} and the asynchronous contention resolution diversity ALOHA \cite{6847724} transmit multiple replicas of the same preamble and use iterative interference cancellation techniques to cancel the interference caused by collided packets. Similarly, the coded slotted ALOHA proposed in \cite{7302046} combines the packet erasure correcting codes and the successive interference cancellation to resolve collisions. The enhanced contention resolution ALOHA \cite{8057803} employs selection combining and maximal-ratio combining techniques to resolve collisions when successive interference cancellation fails in the contention resolution ALOHA \cite{5963225}. Moreover, both the spread spectrum ALOHA \cite{905024} and the enhanced spread spectrum ALOHA \cite{6324672} employ direct sequence spread spectrum on the transmitted preambles to reduce collisions. 


Recently, grant-free (GF) RA \cite{8454392} was proposed to reduce the signaling overhead of the four-step procedure in grant-based RA. In GF-RA, an active user transmits a preamble over the PRACH and then data over the PUSCH in a time division multiplexing manner without waiting for a response from the BS \cite{9261952}. In general, there are two types of GF-RA: sourced and unsourced GF-RA. In sourced GF-RA, the four-step procedure is reduced to two steps, where the first and third steps are combined into one step, and the second and fourth steps are combined into the other step. The joint user activity detection and time delay estimation problem over the PRACH has been studied in \cite{9413870,9691883,9390399}, where the authors proposed various compressive sensing-based algorithms. 
After obtaining time delays, the authors in \cite{10375263,9714456,9372306} proposed several message passing algorithms to detect active users, estimate channels, and recover data jointly over the PUSCH. The unsourced GF-RA, i.e., unsourced multiple access (UMA) \cite{8006984}, has been proposed further to reduce the signaling overhead of the sourced GF-RA. In unsourced GF-RA, all the users share a common codebook for data. The receiver's task is to decode data without sending responses to active users. As a result, the two-step procedure in sourced GF-RA is reduced to one step. Existing works on unsourced GF-RA assumed perfect or coarsely perfect synchronization between BS and users \cite{9153051,9374476,9178409,9978081,li2023graphbased}, which is an oversimplified assumption.

The above-mentioned RA schemes share a common mechanism, where preambles are first sent over the PRACH for user activity detection and time delays estimation, followed by data transmission over the PUSCH. This work is focused on the user collision problem over the PRACH in asynchronous massive RA, where a large number of users share a common pool of preamble sequences. Each active user attempts to access the BS by transmitting a randomly selected preamble in an asynchronous manner, i.e., no symbol-level synchronization is achieved at the BS. The BS then detects active users and estimates their time delays based on the received preambles. The existing approaches are either inefficient or inapplicable when applied to the considered scenario of asynchronous massive RA. First, in the current LTE \cite{3gpp} and NR \cite{3gpp3} systems, preamble sequences are carried by multiple orthogonal subcarriers. In order to maintain the orthogonality of orthogonal frequency division multiplexing (OFDM), the length of the cyclic prefix (CP) must exceed the sum of the maximal multipath delay and the maximal round-trip transmission delay. In this way, the inter-symbol-interference (ISI) and the inter-carrier-interference (ICI) caused by the asynchronous transmission of preambles can be eliminated by CP removal. However, since the CP is useless in user activity detection, long CP may significantly reduce the spectrum efficiency over the PRACH. Second, the improved ALOHA-based protocols \cite{4155680,6847724,7302046,8057803,5963225,905024,6324672} as described previously either transmit multiple replicas of the same preamble or use erasure correcting codes or spread spectrum on the preambles, which consumes a large signaling overhead. Third, the asynchronous GF-RA schemes \cite{9413870,9691883,9390399} expand a preamble with multiple delayed versions in the construction of the measurement matrix. This enlarges the dimensions of the measurement matrix, and significantly increases the computational complexity of the receiver, especially when user signals arrive at the BS in a very asynchronous manner, which is exactly the case considered in this paper. More importantly, the asynchronous GF-RA schemes \cite{9413870,9691883,9390399} assume that time delays are integer multiples of symbol duration. Due to the limited coordination between the BS and the activated users at this very early stage of communications, this symbol-level synchrony between users is difficult to achieve at the BS. As such, it is highly desirable to find an efficient solution for user activity detection and time delay estimation in asynchronous massive RA.

In this paper, we propose a new solution to efficiently mitigate the user collision problem over the PRACH in asynchronous massive RA. In our proposed scheme, the BS continuously receives signals, and detects active users through calibrating their time delays. We tackle this challenging receiver design problem by developing a novel user activity detection algorithm to enable asynchronous massive RA with affordable complexity. Our main contributions are outlined as follows: 
\begin{itemize}
	\item We establish a sliding-window-based system model for uplink asynchronous massive RA, where oversampling is employed at the receiver to avoid information loss due to user asynchronism. Note that in our established model, the time delay of each active user takes a continuous value rather than an integer multiple of the symbol duration as in the prior works \cite{9413870,9691883,9390399}.
	\item We propose an expectation-maximization (EM) based user activity detection method, named user activity detection with delay-calibration (UAD-DC), to detect active users through calibrating their time delays. The novelty of our proposed method is two-fold. In the expectation step (E-step) of EM, a modified turbo compressed sensing (Turbo-CS) algorithm is developed to obtain the posterior probability via exploiting sparsity on the active users, where new treatments are required to modify the original Turbo-CS algorithm \cite{7912330} to handle the colored noise caused by oversampling and reduce the complexity of linear minimum mean square error (LMMSE) estimator. In the maximization step (M-step) of EM, a novel delay-calibration method is proposed to maximize the evidence lower bound.
	\item To improve the user activity detection performance in the considered oversampled system, we propose a new design of the pulse shaping filter by minimizing the Bayesian Cram\'er-Rao bound (BCRB) of the detection problem under the constraint of limited spectral bandwidth.
\end{itemize}
Numerical results demonstrate the superior performance of the proposed algorithm in terms of the probability of misdetection and the normalized mean square error (NMSE) of channel matrix. Our numerical experiments show that the proposed detection algorithm can increase the number of successfully detected active users by up to 38.90\%, as compared to the conventional cross-correlation-based RA.

\subsection{Organization and Notations}
The rest of this paper is organized as follows. In Section
\Romannum{2}, the system model for the asynchronous massive RA and the sliding-window-based discrete system are described. In Section \Romannum{3}, we propose the UAD-DC algorithm for each window, where the delay-calibration method is realized by an EM-based parameter learning algorithm and the user activity detection problem is solved by the modified Turbo-CS algorithm. In Section \Romannum{4}, we derive the analytical BCRB for the modified Turbo-CS algorithm. Based on the derived BCRB, we propose an optimization algorithm in Section \Romannum{5} for obtaining a new design of the pulse shaping filter to further improve the performance of user activity detection. Simulations are presented and discussed in Section \Romannum{6} and the paper is concluded in Section \Romannum{7}.

Notation: The following notation is used throughout the paper. Matrices are in bold capital letters and vectors in bold lowercase. The matrix $\mathbf{I}_n$ denotes the $n\times n$ identity matrix and the vector $\mathbf{0}_{n\times 1}$ denotes the $n\times 1$ all-zero column vector. The trace, transpose and conjugate transpose of $\mathbf{A}$ are represented by $\text{Tr}\{\mathbf{A}\}$, $\mathbf{A}^T$ and $\mathbf{A}^H$, respectively. The operators $|\cdot|$, $E\{\cdot\}$ and $\text{Var}\{\cdot\}$ calculate the absolute value, the expectation and the variance, respectively. Finally, $a(t)\ast b(t)$ and $a(t)\otimes b(t)$ denote the convolution and the Kronecker product of $a(t)$ and $b(t)$, respectively.

\section{System Model}
In this section, we describe the system model and formulate the problem to be addressed. In particular, we detail the transceiver structure for asynchronous massive RA and the sliding-window-based discrete system model.
\subsection{Transceiver Structure for Asynchronous Massive RA}
We consider a cellular system consisting of one BS equipped with $N_\text{R}$ receive antennas and several single-antenna users, as shown in Fig. \ref{fig:system_model}. While accessing the BS, each active user picks one preamble from the shared set of preambles at random.
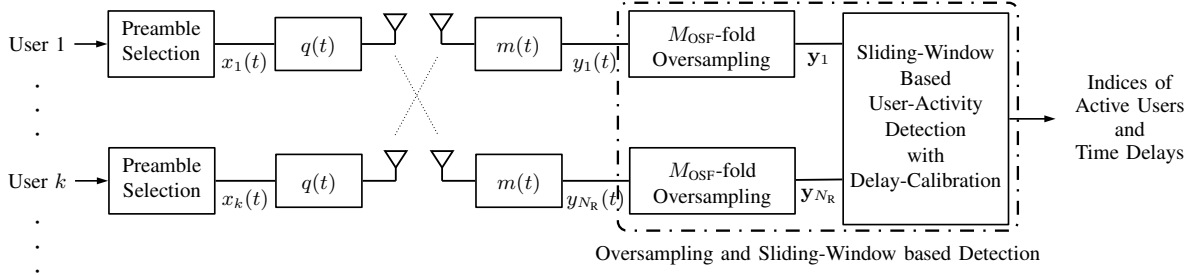
\begin{figure*}[!htbp]
	\centering
	\resizebox{0.9\textwidth}{!}{\def\antenna{
	-- +(0mm,2.0mm) -- +(1.625mm,4.5mm) -- +(-1.625mm,4.5mm) -- +(0mm,2.0mm)
}
\tikzset{%
	harddecision/.style={draw, 
		path picture={
			\pgfpointdiff{\pgfpointanchor{path picture bounding box}{north east}}%
			{\pgfpointanchor{path picture bounding box}{south west}}
			\pgfgetlastxy\x\y
			\tikzset{x=\x*.4, y=\y*.4}
			%
			\draw (-0.5,-0.5)--(0,-0.5)--(0,0.5)--(0.5,0.5);  
			\draw (-0.25,0)--(0.25,0);
	}}
}

\begin{tikzpicture}
	\node (c0) {\small User 1};
	\node[dspsquare, right= 0.5cm of c0,minimum width=1.6cm,minimum height=1cm,text height=1.7em]       (c2) {\small Preamble\\ Selection};
	\node[dspsquare, right= 0.9cm of c2,minimum width=1.3cm]       (c10) {\small $q(t)$};
	\node[coordinate,right= 0.5cm of c10] (c3) {};

	\node[below= 0.25cm of c0] (c222) {\tiny \textbullet};
	\node[below= 0.04cm of c222] (c2222) {\tiny \textbullet};
	\node[below= 0.04cm of c2222] (c22222) {\tiny \textbullet};
	\node[coordinate,below= 0.62cm of c2] (cfix) {};
	
	\node[below= 1.6cm of c0] (c00) {\small User $k$};
	\node[dspsquare, right= 0.5cm of c00,minimum width=1.6cm,minimum height=1cm,text height=1.7em]                    (c22) {\small Preamble\\ Selection};
	\node[dspsquare, right= 0.9cm of c22,minimum width=1.3cm]       (c12) {\small $q(t)$};
	\node[coordinate,right= 0.5cm of c12] (c33) {};
	
	\node[below= 0.25cm of c00] (c2225) {\tiny \textbullet};
	\node[below= 0.04cm of c2225] (c22225) {\tiny \textbullet};
	\node[below= 0.04cm of c22225] (c222225) {\tiny \textbullet};
	
	\node[coordinate,right= 0.7cm of c3] (c8) {};
	\node[coordinate,right= 0.7cm of c33] (c9) {};
	
	\node[coordinate,right= 0.5cm of c10.-15] (c88) {};
	\node[coordinate,below= 1.2cm of c88] (c99) {};
	
	\node[dspsquare, right= 0.5cm of c8,minimum width=1.3cm]       (c42) {\small $m(t)$};
	\node[dspsquare, right= 0.5cm of c9,minimum width=1.3cm]       (c43) {\small $m(t)$};
	
	\node[dspsquare, right= 1cm of c42,text width=2.5cm,minimum width=1cm,minimum height=1cm,text height=-1em]       (c13) {\small\doublespacing $M_{\text{OSF}}$-fold \\ Oversampling};
	\node[dspsquare, right= 1cm of c43,text width=2.5cm,minimum width=0.5cm,minimum height=1cm,text height=-1em]       (c14) {\small\doublespacing $M_{\text{OSF}}$-fold \\ Oversampling};
	
	\node[dspsquare,right= 10.3cm of cfix,minimum height=3.2cm,text height=6.2em,minimum width=2.5cm] (c17) {\small Sliding-Window\\ Based\\ User-Activity\\ Detection\\ with\\Delay-Calibration };
	
	\node[right= 0.7cm of c17] (c20) {\parbox{2cm}{\centering\small Indices of\\\small   Active Users\\\small and \\\small Time Delays}};
	
	\node[coordinate,right= 0.65cm of c88] (c200) {};
	\node[coordinate,right= 0.65cm of c99] (c211) {};
	
	\draw[dspconn] (c0) -- node[] {} (c2);
	\draw[thick] (c2) -- node[midway,below] {\small $x_1(t)$} (c10);
	\draw[thick] (c10) -- node[] {} (c3);
	\draw[dspconn] (c00) -- node[] {} (c22);
	\draw[thick] (c22) -- node[midway,below] {\small $x_k(t)$} (c12);
	\draw[thick] (c12) -- node[] {} (c33);
	\draw [thick] (c3) \antenna;
	\draw [thick] (c33) \antenna;
	\draw [thick] (c8) \antenna;
	\draw [thick] (c9) \antenna;
	\draw[densely dotted] (c88) -- node[] {} (c211);
	\draw[densely dotted] (c99) -- node[] {} (c200);
	\draw[thick] (c8) -- node[] {} (c42);
	\draw[thick] (c9) -- node[] {} (c43);
	\draw[thick] (c42) -- node[midway,below] {$y_1(t)$} (c13);
	\draw[thick] (c43) -- node[midway,below] {$y_{N_\text{R}}(t)$} (c14);
	\draw[thick] (c13) -- node[midway,below] {\small $\mathbf{y}_1$} (c17.138);
	\draw[thick] (c14) -- node[midway,below] {\small $\mathbf{y}_{N_\text{R}}$} (c17.-144);
	\draw[dspconn] (c17) -- node[midway,above] {} (c20);
	
	\node (l3) at (9.1,-2.5) {};
	\node (l4) at (14.5,0.3) {};
	\tikzset{blue dotted2/.style={draw=black, line width=1pt,
			dash pattern=on 1pt off 4pt on 6pt off 4pt,
			inner sep=2mm, rectangle, rounded corners}};
	\node (first dotted box) [blue dotted2, fit = (l3) (l4)] {};
	\node at (first dotted box.south) [below, inner sep=2mm] {\small Oversampling and Sliding-Window based Detection};
	\end{tikzpicture}}
	\caption{System model for the scenario of massive random access.}
	\label{fig:system_model}
\end{figure*}
The preamble signal transmitted by user $k$, denoted by $x_k(t)$, is given by
\begin{equation}
	\resizebox{\columnwidth}{!}{$x_{k}(t) = x_{k,0}\delta(t)+x_{k,1}\delta(t-T_\text{S})+\cdots+x_{k,N_{\text{PL}}-1}\delta(t-(N_{\text{PL}}-1)T_\text{S}),$}
	\label{equ_x_S}
\end{equation}
where $x_{k,n}$ is the preamble symbol at the time instant $nT_\text{S}$, $N_{\text{PL}}$ is the preamble length,
$\delta(t)$ is the Dirac delta function, and $T_\text{S}$ is the symbol duration. After going through the pulse shaping filter, denoted by $q(t)$, the preamble signal is then transmitted over the channel. 


At the BS, the preamble signal at the $n_\text{R}$-th receive antenna, denoted by $s_{n_\text{R}}(t)$, is written as
\begin{equation}\label{equ_conv}
	\begin{split}
		s_{n_\text{R}}(t)&=\sum_km(t)\ast h_{k,n_\text{R}}(t)\ast q(t)\ast x_k(t-\tau_k)\\&=\sum_kh_{k,n_\text{R}}z(t)\ast x_{k}(t-\tau_k),
	\end{split}
\end{equation}
where $m(t)$ is the matched filter,  $h_{k,n_\text{R}}(t)=h_{k,n_\text{R}}\delta(t)$ is the impulse response of the channel from user $k$ to the $n_\text{R}$-th receive antenna with $h_{k,n_\text{R}}$ being the complex gain, $\tau_k$ is the time delay of the transmission from user $k$ to the BS, $z(t)$ is the convolution of $q(t)$ and $m(t)$. It is important to notice that $\tau_k$ may not be an integer multiple of $T_\text{S}$. Moreover, the noise at the $n_\text{R}$-th receive antenna, denoted by $n_{n_\text{R}}(t)$, is written as
\begin{equation}
	n_{n_\text{R}}(t)=m(t)\ast w(t),
	\label{equ_sysnoise}
\end{equation}
where $w(t)$ is the environmental noise. 
With \eqref{equ_conv} and \eqref{equ_sysnoise}, the signal at the $n_\text{R}$-th receive antenna, denoted by $y_{n_\text{R}}(t)$, is
\begin{equation}\label{sysoverall}
	\begin{split}
		y_{n_\text{R}}(t)&=s_{n_\text{R}}(t)+n_{n_\text{R}}(t)\\&=\sum_kh_{k,n_\text{R}}z(t)\ast x_{k}(t-\tau_k)+m(t)\ast w(t).
	\end{split}
\end{equation}

The continuous signal $y_{n_\text{R}}(t)$ then goes through the oversampling module to yield the discrete samples with the sampling interval $\frac{T_\text{S}}{M_{\text{OSF}}}$, where $M_{\text{OSF}}$ is the oversampling factor. In a synchronous system, Nyquist-rate sampling is sufficient since it captures all the information. However, in an asynchronous system, we can obtain more information by oversampling the received signal $y_{n_\text{R}}(t)$. Since the time delay $\tau_k$ is continuously-valued and unknown at the BS, inaccurate estimation of $\tau_k$ may cause large ISI. Oversampling can increase the resolution in time and allow the BS to estimate $\tau_k$ more accurately. However, oversampling increases the ISI and the amount of data collected at the receiver, as shown in Fig. \ref{zt}, resulting in high computational complexity involved in signal processing. In the following, we introduce a sliding-window technique so that user activity detection is performed in each window and the overall complexity is reduced.
\begin{figure}[!htbp]
	\centering
	\resizebox{\columnwidth}{!}{\input{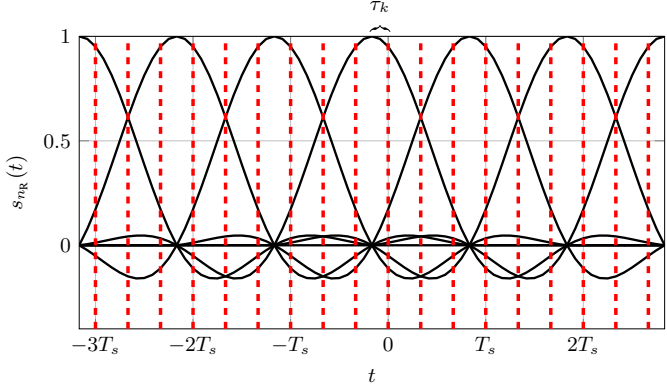}}
	\caption{An example of the ISI caused by the $k$-th user, where the 3-fold oversampling is represented by the red dashed lines, $h_{k,n_\text{R}}=1$ and $x_{k,n}=1, n\in [0,N_{\text{PL}}-1]$.}\label{zt}
\end{figure}

\subsection{Sliding-Window-Based Discrete System Model}
We now describe the sliding-window-based discrete system model as follows. An example of sliding-windows at the BS is shown in Fig. \ref{fig:transmitted_signal}, where $t$ is defined as the receiving time at the BS, $\mathbf{x}_{k}=[x_{k,0},\cdots,x_{k,N_{\text{PL}}-1}]^T\in\mathbb{C}^{N_{\text{PL}}\times 1}$ denotes the preamble sequence transmitted at user $k$, and blocks with different colors represent different preambles. Due to the limited number of orthogonal preambles, the reuse of identical preambles is unavoidable among different active users. Collisions happen when identical preambles from multiple active users overlap. The dark areas in Fig. \ref{fig:transmitted_signal} represent collisions, while the light areas represent collision-free zones. 
\begin{figure}[!htbp]
	\centering
	\resizebox{\columnwidth}{!}{\begin{tikzpicture}
	\fill [red, opacity=0.2] (1+0.3,1.5+0.5) rectangle (2.5+0.3,2+0.5);
	\node at (0+0.3,1.75+0.5) {\small Preamble 1};
	\fill [green, opacity=0.2] (4.5+0.3,1.5+0.5) rectangle (6+0.3,2+0.5);
	\node at (3.5+0.3,1.75+0.5) {\small Preamble 2};
	\fill [blue, opacity=0.2] (8+0.3,1.5+0.5) rectangle (9.5+0.3,2+0.5);
	\node at (7+0.3,1.75+0.5) {\small Preamble 3};
	
	\draw [dashed,thick] (0.5,1) -- (10.1,1) {};
	\draw [dashed,thick] (0.5,0) -- (10.1,0) {};
	\draw [dashed,thick] (0.5,-1) -- (10.1,-1) {};
	\draw [dashed,thick] (0.5,-2) -- (10.1,-2) {};
	\draw [dashed,thick] (0.5,-3) -- (10.1,-3) {};
	\draw [dashed,thick] (0.5,-4) -- (10.1,-4) {};
	\draw [dashed,thick] (0.5,-5) -- (10.1,-5) {};
	\fill [red, opacity=0.2] (3,0) rectangle (5,1);
	\fill [red, opacity=0.9] (4.8,0) rectangle (5,1);
	\node [anchor=center] at (4,0.5) {$\mathbf{x}_1$};
	\fill [red, opacity=0.9] (3,0) rectangle (3.5,1);
	\fill [red, opacity=0.9] (3,-1) rectangle (3.5,0);
	\fill [red, opacity=0.9] (4.8,-3) rectangle (5,-2);
	\fill [red, opacity=0.2] (1.5,-1) rectangle (3.5,0);
	\node [anchor=center] at (2.5,-0.5) {$\mathbf{x}_2$};
	\fill [blue, opacity=0.2] (8,-2) rectangle (10,-1);
	\node [anchor=center] at (9,-1.5) {$\mathbf{x}_3$};
	\fill [red, opacity=0.2] (4.8,-3) rectangle (6.8,-2);
	\node [anchor=center] at (5.8,-2.5) {$\mathbf{x}_4$};
	\fill [green, opacity=0.2] (7,-4) rectangle (9,-3);
	\node [anchor=center] at (8,-3.5) {$\mathbf{x}_5$};
	\fill [blue, opacity=0.2] (4,-5) rectangle (6,-4);
	\node [anchor=center] at (5,-4.5) {$\mathbf{x}_6$};
	
	\node at (-0.2,0.5) {\small User 1};
	\node at (-0.2,-0.5) {\small User 2};
	\node at (-0.2,-1.5) {\small User 3};
	\node at (-0.2,-2.5) {\small User 4};
	\node at (-0.2,-3.5) {\small User 5};
	\node at (-0.2,-4.5) {\small User 6};
	
	\fill [red, opacity=0.9] (4.5+1,-0.5-6.3+0.5) rectangle (5.5+1,-6.3+0.5);
	\node at (3.8+1,-0.25-6.3+0.5) {\small Collision};
	\fill [red, opacity=0.2] (8+0.7,-0.5-6.3+0.5) rectangle (9+0.7,-6.3+0.5);
	\node at (7+0.45,-0.25-6.3+0.5) {\small \hspace{0.5cm}No collision};
	
	\draw [<->,dotted] (0.5,0.5)--node[above]{$\tau_1$}(3,0.5);
	\draw [<->,dotted] (0.5,-0.5)--node[above]{$\tau_2$}(1.5,-0.5);
	\draw [<->,dotted] (0.5,-1.5)--node[above]{$\tau_3$}(8,-1.5);
	\draw [<->,dotted] (0.5,-2.5)--node[above]{$\tau_4$}(4.8,-2.5);
	\draw [<->,dotted] (0.5,-3.5)--node[above]{$\tau_5$}(7,-3.5);
	\draw [<->,dotted] (0.5,-4.5)--node[above]{$\tau_6$}(4,-4.5);
	
	\draw [->,thick] (0.5,-5) -- (10.1,-5) node (xaxis) [right] {$t$};
	\node at (0.5,-5.5) {0};
	\draw [-,thick,dashdotted] (0.5,-5) -- (0.5,1);
	
	\node (l1) at (0.55,-5) {};
	\node (l2) at (3.7,1) {};
	\tikzset{blue dotted/.style={draw=black!50!white, line width=1pt, dash pattern=on 1pt off 4pt on 6pt off 4pt,
			inner sep=0.3mm, rectangle, rounded corners}};
	\node (first dotted box) [blue dotted, fit = (l1) (l2)] {};
	
	\node (l3) at (2.2,-5) {};
	\node (l4) at (5.8,1) {};
	\tikzset{blue dotted2/.style={draw=blue, line width=1pt,
			dash pattern=on 1pt off 4pt on 6pt off 4pt,
			inner sep=2mm, rectangle, rounded corners}};
	\node (first dotted box) [blue dotted2, fit = (l3) (l4)] {};
	\node at (first dotted box.north) [above, inner sep=1mm] {\small Sliding-Window};
	
	\node (l5) at (6.7,-5.3) {$\cdots$};
	\draw [<->,dotted] (0.5,-6.3+0.5)--node[below]{$\Delta t$}(1.9,-6.3+0.5);
	\draw [<->,dotted] (0.5,-6.8+0.5)--node[below]{$LT_\text{S}$}(3.9,-6.8+0.5);
\end{tikzpicture}}
	\caption{An example of sliding-windows at the BS.}
	\label{fig:transmitted_signal}
\end{figure}
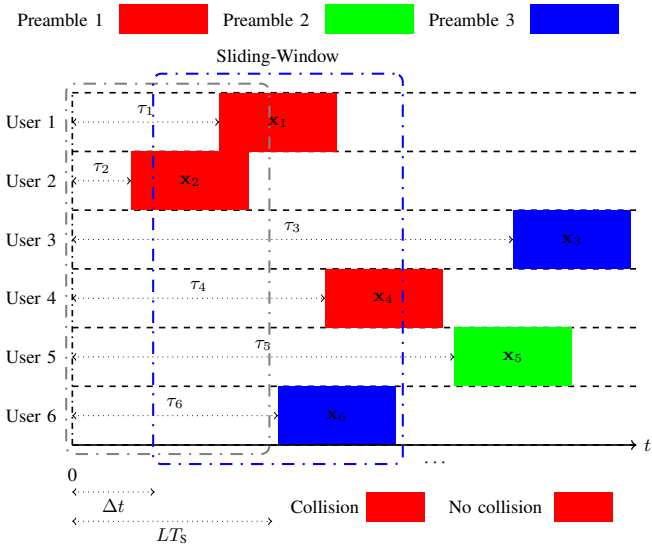

As shown in Fig. \ref{fig:transmitted_signal}, the BS detects the active users over a moving time window with the window width $LT_\text{S}$ and the stepsize $\Delta t$. The received preamble sequences $\mathbf{x}_{k}$ can be categorized into the following four types based on their position relative to the observation window, where $t_l$ is the starting time of the $l$-th window:  
\begin{itemize}
	\item  Type \uppercase\expandafter{\romannumeral1}: $\mathbf{x}_{k}$ is within the observation window, i.e., $t_l\leq \tau_{k} \leq t_l+(L-N_{\text{PL}})T_\text{S}$;
	\item  Type \uppercase\expandafter{\romannumeral2}: $\mathbf{x}_{k}$ is across the left boundary of the observation window, i.e., $t_l-N_{\text{PL}}T_\text{S}\leq\tau_{k}<t_l$;
	\item  Type \uppercase\expandafter{\romannumeral3}: $\mathbf{x}_{k}$ is across the right boundary of the observation window, i.e., $t_l+(L-N_{\text{PL}})T_\text{S}< \tau_{k}\leq t_l+LT_\text{S}$;
	\item  Type \uppercase\expandafter{\romannumeral4}: $\mathbf{x}_{k}$ is outside the observation window, i.e., $t_l-N_{\text{PL}}T_\text{S}>\tau_{k}$ or $\tau_{k}>t_l+LT_\text{S}$.
\end{itemize}
For the highlighted sliding-window, $\mathbf{x}_{1}$ and $\mathbf{x}_{6}$ belong to Type \uppercase\expandafter{\romannumeral1}, $\mathbf{x}_{2}$ belongs to Type \uppercase\expandafter{\romannumeral2}, $\mathbf{x}_{4}$ belongs to Type \uppercase\expandafter{\romannumeral3}, $\mathbf{x}_{3}$ and $\mathbf{x}_{5}$ belong to Type \uppercase\expandafter{\romannumeral4}. The preambles in Type \uppercase\expandafter{\romannumeral2} and Type \uppercase\expandafter{\romannumeral3} may result in the misdetection of the corresponding users in the current window. In order to prevent the detection of the same active user in multiple windows, we choose the stepsize
\begin{equation}
	\Delta t = (L-N_{\text{PL}})T_\text{S}.
\end{equation}

After the oversampling module, the received preamble signal in \eqref{equ_conv} in the $l$-th window is sampled as 
\begin{equation}\label{discret}
	\begin{split}
		&s_{{n_\text{R}}_l}\left[\frac{i}{M_{\text{OSF}}}\right]=\sum_{k_l=1}^{K_l}\sum_{j=0}^{LM_{\text{OSF}}-1}\alpha_{k_l}h_{k_l,n_\text{R}}z\left[\frac{i-j}{M_{\text{OSF}}}\right]\\&\qquad\times x_{k_l}\left[\frac{j}{M_{\text{OSF}}}\right],\quad
			i=0,1,\cdots,LM_{\text{OSF}}-1
	\end{split}
\end{equation}
where $k_l$ and $K_l$ denote the user $k$ and the number of users in the $l$-th window, respectively,  $s_{{n_\text{R}}_l}\left[\frac{i}{M_{\text{OSF}}}\right]=s_{n_\text{R}}(t)|_{t=t_l+\frac{i}{M_{\text{OSF}}}T_\text{S}}$,  $z\left[\frac{i-j}{M_{\text{OSF}}}\right]=z(t)|_{t=\frac{i-j}{M_{\text{OSF}}}T_\text{S}}$ \cite{9354165}. Note that $K_l$ is unknown at the receiver and is determined in the initialization step of Algorithm \ref{alg_jdcuad}. The variable $\alpha_{k_l}$ is the user activity indicator defined as
\begin{equation}
	\alpha_{k_l}=\begin{cases}
		1,\text{\quad if user $k$ is active in the $l$-th window,}\\
		0,\text{\quad otherwise.}
	\end{cases}
\end{equation} 
Since most of the energy of $z(t)$ is concentrated in a finite range, we consider $z\left[\frac{i-j}{M_{\text{OSF}}}\right]$ in the range of $\frac{i-j}{M_{\text{OSF}}}\in[-3,3]$, and omit $z\left[\frac{i-j}{M_{\text{OSF}}}\right]$ out of this range. Moreover, based on the type of received preamble, we have
	\begin{itemize}
		\item Type \uppercase\expandafter{\romannumeral1}:\\
		 \resizebox{0.85\columnwidth}{!}{$x_{k_l}\left[\frac{j}{M_{\text{OSF}}}\right]=\begin{cases}
			x_{k,n}, \text{ if }\frac{j}{M_{\text{OSF}}} =n\in [0,N_{\text{PL}}-1]\\
			0,\text{\qquad otherwise}
		\end{cases}$}
		\item Type \uppercase\expandafter{\romannumeral2}:\\ \resizebox{0.93\columnwidth}{!}{$x_{k_l}\left[\frac{j}{M_{\text{OSF}}}\right]=\begin{cases}
			x_{k,n}, \text{ if }\frac{j}{M_{\text{OSF}}} =n\in [\lfloor\frac{\tau_{k_l}}{T_\text{S}}\rfloor,N_{\text{PL}}-1]\\
			0,\text{\qquad otherwise}
		\end{cases}$}
		\item Type \uppercase\expandafter{\romannumeral3}:\\ \resizebox{0.93\columnwidth}{!}{$x_{k_l}\left[\frac{j}{M_{\text{OSF}}}\right]=\begin{cases}
			x_{k,n}, \text{ if }\frac{j}{M_{\text{OSF}}} =n\in [0,L+\lfloor\frac{\tau_{k_l}}{T_\text{S}}\rfloor]\\
			0,\text{\qquad otherwise}
		\end{cases}$}
		\item Type \uppercase\expandafter{\romannumeral4}: $x_{k_l}\left[\frac{j}{M_{\text{OSF}}}\right]=0$
\end{itemize}
where $\lfloor\cdot\rfloor$ denotes the operation of rounding to the nearest integer less than or equal to the input variable, $\tau_{k_l}=t_l-\tau_k$. The vector form of \eqref{discret} is written as
\begin{equation}
	\mathbf{s}_{{n_\text{R}}_l}= \sum_{k_l=1}^{K_l}\alpha_{k_l}h_{k_l,n_\text{R}}\mathbf{Z}\mathbf{x}_{k_l}(\tau_{k_l}),
	\label{equ_matrix_sing}
\end{equation}
where $\mathbf{s}_{{n_\text{R}}_l}=[s_{{n_\text{R}}_l}[0],\cdots,s_{{n_\text{R}}_l}[L-\frac{1}{M_{\text{OSF}}}]]^T\in\mathbb{C}^{LM_{\text{OSF}}\times1}$, $\mathbf{x}_{k_l}(\tau_{k_l})=[x_{k_l}\left[0\right],\cdots,x_{k_l}[L-\frac{1}{M_{\text{OSF}}}]]^T\in\mathbb{C}^{LM_{\text{OSF}}\times1}$. The matrix $\mathbf{Z}\in\mathbb{R}^{LM_{\text{OSF}} \times LM_{\text{OSF}}}$ is a Toeplitz matrix constructed by $z(t)$ at different time instants with the following form
\begin{equation}
	\resizebox{\columnwidth}{!}{$\mathbf{Z} = \begin{bmatrix}
		z[0] & z[-\frac{1}{M_{\text{OSF}}}] & \dots & z[-3] & 0 & 0 & \dots & 0 & 0\\ 
		z[\frac{1}{M_{\text{OSF}}}] & z[0] & \dots & z[-3+\frac{1}{M_{\text{OSF}}}] & z[-3] & 0 & \dots & 0 & 0\\
		\vdots & \vdots & \ddots & \vdots& \vdots& \vdots& \ddots& \vdots& \vdots\\
		0 & 0 & \dots & 0& 0 & z[3] & \dots & z[\frac{1}{M_{\text{OSF}}}] & z[0]\\
	\end{bmatrix}.$}
	\label{eq_zform}
\end{equation}
Assuming that $z(t)$ is a symmetric filter with $z[\frac{i-j}{M_{\text{OSF}}}]=z[\frac{j-i}{M_{\text{OSF}}}]$, $\mathbf{Z}$ is a symmetric matrix.
Similar to \eqref{discret}, the discrete-time form of \eqref{equ_sysnoise} in the $l$-th window is
\begin{equation}\label{discretnoise}
	\begin{split}
		n_{{n_\text{R}}_l}\left[\frac{i}{M_{\text{OSF}}}\right]&=\sum_{j=0}^{LM_{\text{OSF}}-1}m\left[\frac{i-j}{M_{\text{OSF}}}\right]w_l\left[\frac{j}{M_{\text{OSF}}}\right],\\&\qquad\qquad\quad
		i=0,1,\cdots,LM_{\text{OSF}}-1
	\end{split}
\end{equation}
where $w_l\left[\frac{j}{M_{\text{OSF}}}\right]=w(t)|_{t=t_l+\frac{j}{M_{\text{OSF}}}T_\text{S}}$. The vector form of \eqref{discretnoise} is written as
\begin{equation}
	\mathbf{n}_{{n_\text{R}}_l}=\mathbf{F}\mathbf{w}_{{n_\text{R}}_l},
	\label{equ_noise}
\end{equation}
where $\mathbf{F}\in\mathbb{R}^{LM_{\text{OSF}} \times LM_{\text{OSF}}}$ is the Toeplitz matrix constructed by $m(t)$ at different time instants with the form similar to \eqref{eq_zform}, and $\mathbf{w}_{{n_\text{R}}_l}\sim \mathcal{CN}(\mathbf{0}_{LM_{\text{OSF}}\times1},\sigma_\text{N}^2\mathbf{I}_{LM_{\text{OSF}}})$ represents the complex Gaussian random variables with zero mean and variance $\sigma_\text{N}^2$. With \eqref{equ_matrix_sing} and \eqref{equ_noise}, the samples at the $n_\text{R}$-th receive antenna in the $l$-th window, denoted by $\mathbf{y}_{{n_\text{R}}_l}\in\mathbb{C}^{LM_{\text{OSF}}\times 1}$, are
\begin{equation}
	\begin{split}
		\mathbf{y}_{{n_\text{R}}_l}=\sum_{k_l=1}^{K_l}\alpha_{k_l}h_{k_l,n_\text{R}}\mathbf{Z}\mathbf{x}_{k_l}(\tau_{k_l})+\mathbf{F}\mathbf{w}_{{n_\text{R}}_l}.
	\end{split}
	\label{equ_siso}
\end{equation}
Considering all the receive antennas, we obtain 
\begin{equation}\label{equ_sys1}
	\mathbf{Y}_l=[\mathbf{y}_{1_l},\cdots,\mathbf{y}_{{N_\text{R}}_l}]=
	\mathbf{Z}\mathbf{X}(\bm{\tau}_l)\mathbf{G}_l + \mathbf{FW}_l,
\end{equation}
where $\mathbf{Y}_l\in\mathbb{C}^{LM_{\text{OSF}}\times N_\text{R}}$ denote the received samples, $\mathbf{X}(\bm{\tau}_l)=[\mathbf{x}_{1_l}(\tau_{1_l}),\cdots,\mathbf{x}_{K_l}(\tau_{K_l})]\in\mathbb{C}^{LM_{\text{OSF}}\times K_l}$ is a delayed preamble matrix with continuously-valued delay $\bm{\tau}_l=[\tau_{1_l},\cdots,\tau_{K_l}]^T\in\mathbb{R}^{K_l\times 1}$, $\mathbf{G}_l=\begin{bmatrix}
	\alpha_{1_l}h_{1_l,1},\cdots,\alpha_{1_l}h_{1_l,N_\text{R}}\\ \vdots \\\alpha_{K_l}h_{K_l,1},\cdots, \alpha_{K_l}h_{K_l,N_\text{R}}
\end{bmatrix}\in\mathbb{C}^{K_l\times N_\text{R}}$ is a row sparse equivalent channel matrix,  $\mathbf{W}_l=[\mathbf{w}_{1_l},\cdots,\mathbf{w}_{{n_\text{R}}_l}]\in\mathbb{C}^{LM_{\text{OSF}}\times N_\text{R}}$ is the noise matrix. The $k$-th row of $\mathbf{G}_l$, denoted by  $\mathbf{g}_{k_l}\in\mathbb{C}^{N_\text{R}\times1}$, follows a Bernoulli Gaussian distribution:
\begin{equation}\label{equ_gdis}
	p(\mathbf{g}_{k_l})=(1-\rho_l)\delta(\mathbf{g}_{k_l})+\rho_l \mathcal{CN}(\bm{0}_{N_\text{R}\times1},\gamma_k\mathbf{I}_{N_\text{R}}),
\end{equation}
where $\rho_l$ is the probability of active users in the $l$-th window, and $\gamma_k$ is the path-loss and shadowing component depending on the user location.

\subsection{Problem Formulation}
Our target is to estimate $\bm{\tau}_l$ and $\mathbf{G}_l$ from the noisy observation $\mathbf{Y}_l$ in \eqref{equ_sys1}, where active users can be detected from the non-zero rows of estimated $\hat{\mathbf{G}}_l$. The maximum likelihood estimate of $\bm{\tau}_l$  is given as
\begin{equation}\label{equ_ml}
		\hat{\bm{\tau}}_l=\arg\max_{\bm{\tau}_l}\ln\int_{\mathbf{G}_l }p(\mathbf{G}_l,\mathbf{Y}_l;\bm{\tau}_l),
\end{equation}
where $p(\mathbf{G}_l,\mathbf{Y}_l;\bm{\tau}_l)$ denotes the joint density of $\mathbf{G}_l$ and $\mathbf{Y}_l$ with the variable $\bm{\tau}_l$. However, it is not a trivial task to solve the above problem since $\mathbf{G}_l$ is unknown. From the Jensen’s inequality \cite{boyd_vandenberghe_2004}, the log-likelihood in \eqref{equ_ml} is lower bounded by
\begin{subequations}\label{equ_mlpro}
	\begin{align}
		\ln\int_{\mathbf{G}_l }p(\mathbf{G}_l,\mathbf{Y}_l;\bm{\tau}_l)&=\ln E_{p'(\mathbf{G}_l)}\left\{\frac{ p(\mathbf{G}_l,\mathbf{Y}_l;\bm{\tau}_l)}{p'(\mathbf{G}_l)}\right\}\\&\geq E_{p'(\mathbf{G}_l)}\left\{\ln\frac{ p(\mathbf{G}_l,\mathbf{Y}_l;\bm{\tau}_l)}{p'(\mathbf{G}_l)}\right\},
	\end{align}
\end{subequations}
where $p'(\mathbf{G}_l)$ is any distribution related to $\mathbf{G}_l$. $E_{p'(\mathbf{G}_l)}\{\cdot\}$ represents the expectation over the distribution $p'(\mathbf{G}_l)$. The equality of \eqref{equ_mlpro} holds when $\ln\frac{ p(\mathbf{G}_l,\mathbf{Y}_l;\bm{\tau}_l)}{p'(\mathbf{G}_l)}$ is an affine function. This is achieved for $p'(\mathbf{G}_l)=p(\mathbf{G}_l|\mathbf{Y}_l;\bm{\tau}_l)$.
Thus, \eqref{equ_ml} is equivalent to 
\begin{equation}\label{equ_em}
		\hat{\bm{\tau}}_l=\arg\max_{\bm{\tau}_l}E_{p(\mathbf{G}_l|\mathbf{Y}_l;\bm{\tau}_l)}\left\{\ln\frac{ p(\mathbf{G}_l,\mathbf{Y}_l;\bm{\tau}_l)}{p(\mathbf{G}_l|\mathbf{Y}_l;\bm{\tau}_l)}\right\}.
\end{equation}
The posterior density $p(\mathbf{G}_l|\mathbf{Y}_l;\bm{\tau}_l)$ is difficult to obtain since $\bm{\tau}_l$ is the variable to be estimated. In the next section, we propose a low-complexity EM-based framework to solve the above problem. 
With the estimated $\hat{\bm{\tau}}_l$, the minimum mean square error (MMSE) estimator of $\mathbf{G}_l$ is given by \cite{kay}
\begin{equation}
		\hat{\mathbf{G}}_l=E\{\mathbf{G}_l|\mathbf{Y}_l\}=\int_{\mathbf{G}_l} \mathbf{G}_lp(\mathbf{G}_l|\mathbf{Y}_l;\hat{\bm{\tau}}_l).
	\label{equ_MMSE}
\end{equation}
In the following, we propose an EM-based UAD-DC method to detect active users through calibrating their time delays. For notational brevity we henceforth omit the sliding-window index $l$ in places without introducing ambiguity.

\section{User Activity Detection with Delay-Calibration}\label{sec2}
This section provides a detailed description of the proposed UAD-DC algorithm.
\subsection{EM-Based Framework}\label{sec2em}
Following the EM principle \cite{em}, UAD-DC consists of two steps named the E-step and the M-step. The E-step calculates the expectation of the log-likelihood $E_{p(\mathbf{G}|\mathbf{Y};\hat{\bm{\tau}}(u))}\left\{\ln\frac{ p(\mathbf{G},\mathbf{Y};\bm{\tau})}{p(\mathbf{G}|\mathbf{Y};\hat{\bm{\tau}}(u))}\right\}$ based on the posterior density $p(\mathbf{G}|\mathbf{Y};\hat{\bm{\tau}}(u))$. The M-step computes $\hat{\bm{\tau}}(u+1)$ for maximizing the expectation in the E-step, where $\hat{\bm{\tau}}(u)$ is the estimate of $\bm{\tau}$ in the $u$-th EM iteration.

The E-step in the $u$-th EM iteration is
\begin{equation}
	\begin{split}\label{equ_expec} E_{p(\mathbf{G}|\mathbf{Y};\hat{\bm{\tau}}(u))}\{\ln p(\mathbf{Y},\mathbf{G};\bm{\tau})\}&=\int_{\mathbf{G}} p(\mathbf{G}|\mathbf{Y};\hat{\bm{\tau}}(u))\\&\hspace{-3.5cm}\times\ln p(\mathbf{Y}|\mathbf{G};\bm{\tau}) +\int_{\mathbf{G}} p(\mathbf{G}|\mathbf{Y};\hat{\bm{\tau}}(u))\ln p(\mathbf{G}),
	\end{split}
\end{equation}
where the posterior density $p(\mathbf{G}|\mathbf{Y};\hat{\bm{\tau}}(u))$ is calculated by a modified Turbo-CS algorithm described in Section \ref{sec_delay}, and $E_{p(\mathbf{G}|\mathbf{Y};\hat{\bm{\tau}}(u))}\{\ln p(\mathbf{G}|\mathbf{Y};\hat{\bm{\tau}}(u))\}$ is dropped since it is irrelevant to $\bm{\tau}$. Assume that $p(\mathbf{g}_{n_\text{R}}|\mathbf{y}_{n_\text{R}};\hat{\bm{\tau}}(u))$ obtained from Section \ref{sec_delay} has the form of  $\mathcal{CN}\left(\mathbf{g}_{n_\text{R}};\hat{\mathbf{g}}_{n_\text{R}}(u),\mathbf{V}^{\mathbf{G}}_{n_\text{R}}(u)\right)$, where $\mathbf{g}_{n_\text{R}}$ and $\mathbf{y}_{n_\text{R}}$ are the $n_\text{R}$-th columns of $\mathbf{G}$ and $\mathbf{Y}$, respectively. The first term in \eqref{equ_expec} is calculated as
\begin{equation}\label{equ_max2}
	\resizebox{\columnwidth}{!}{$
	\begin{aligned}
		&\int_{\mathbf{G}}p(\mathbf{G}|\mathbf{Y};\hat{\bm{\tau}}(u))\ln p(\mathbf{Y}|\mathbf{G};\bm{\tau})\\&=\sum_{n_\text{R}=1}^{N_\text{R}}
		\int_{\mathbf{g}_{n_\text{R}}}p(\mathbf{g}_{n_\text{R}}|\mathbf{y}_{n_\text{R}};\hat{\bm{\tau}}(u))\ln\mathcal{CN}(\mathbf{y}_{n_\text{R}};\mathbf{Z}\mathbf{X}(\bm{\tau})\mathbf{g}_{n_\text{R}},\sigma_\text{N}^2\mathbf{F}\mathbf{F}^H)\\
		&=-N_\text{R}\ln(\pi\sigma_\text{N}^2)-\frac{1}{\sigma_\text{N}^2}\sum_{n_\text{R}=1}^{N_\text{R}}\int_{\mathbf{g}_{n_\text{R}}}\mathcal{CN}\left(\mathbf{g}_{n_\text{R}};\hat{\mathbf{g}}_{n_\text{R}}(u),\mathbf{V}^{\mathbf{G}}_{n_\text{R}}(u)\right)\\&\times\left(\mathbf{y}_{n_\text{R}}^H(\mathbf{F}\mathbf{F}^H)^{-1}\mathbf{y}_{n_\text{R}}-\mathbf{y}_{n_\text{R}}^H(\mathbf{F}\mathbf{F}^H)^{-1}\mathbf{Z}\mathbf{X}(\bm{\tau})\mathbf{g}_{n_\text{R}}-(\mathbf{Z}\mathbf{X}(\bm{\tau})\mathbf{g}_{n_\text{R}})^H\right.\\&\left.\times(\mathbf{F}\mathbf{F}^H)^{-1}\mathbf{y}_{n_\text{R}}+(\mathbf{Z}\mathbf{X}(\bm{\tau})\mathbf{g}_{n_\text{R}})^H(\mathbf{F}\mathbf{F}^H)^{-1}(\mathbf{Z}\mathbf{X}(\bm{\tau})\mathbf{g}_{n_\text{R}})\right),
	\end{aligned}$}
\end{equation}
and the last term in \eqref{equ_expec} is omitted since it is irrelevant to $\bm{\tau}$.

The M-step in the $u$-th EM iteration is
\begin{equation}\label{equ_maxi}
	\hat{\bm{\tau}}(u+1)=\arg\max_{\bm{\tau}}E_{p(\mathbf{G}|\mathbf{Y};\hat{\bm{\tau}}(u))}\{\ln p(\mathbf{Y},\mathbf{G};\bm{\tau})\}.
\end{equation}
Inserting \eqref{equ_max2} into \eqref{equ_maxi} and ignoring the terms independent of $\bm{\tau}$, \eqref{equ_maxi} is equivalent to \eqref{equ_optmax}.
\begin{figure*}[!h]
	\centering
       \begin{subequations}\label{equ_optmax}
		\begin{align}\nonumber&\hat{\bm{\tau}}(u+1)=\arg\max_{\bm{\tau}}\sum_{n_\text{R}=1}^{N_\text{R}}\int_{\mathbf{g}_{n_\text{R}}}\frac{1}{\sigma_\text{N}^2}\mathcal{CN}\left(\mathbf{g}_{n_\text{R}};\hat{\mathbf{g}}_{n_\text{R}}(u),\mathbf{V}^{\mathbf{G}}_{n_\text{R}}(u)\right)\left(\mathbf{y}_{n_\text{R}}^H(\mathbf{F}\mathbf{F}^H)^{-1}\mathbf{Z}\mathbf{X}(\bm{\tau})\mathbf{g}_{n_\text{R}}+(\mathbf{Z}\mathbf{X}(\bm{\tau})\mathbf{g}_{n_\text{R}})^H(\mathbf{F}\mathbf{F}^H)^{-1}\mathbf{y}_{n_\text{R}}\right.\\&\hspace{4.5cm}\left.-(\mathbf{Z}\mathbf{X}(\bm{\tau})\mathbf{g}_{n_\text{R}})^H(\mathbf{F}\mathbf{F}^H)^{-1}(\mathbf{Z}\mathbf{X}(\bm{\tau})\mathbf{g}_{n_\text{R}})\right)\\	&\nonumber\hspace{-0.5cm}=\arg\max_{\bm{\tau}}\frac{1}{\sigma_\text{N}^2}\sum_{n_\text{R}=1}^{N_\text{R}}2Re\left\{\mathbf{y}_{n_\text{R}}^H(\mathbf{F}\mathbf{F}^H)^{-1}\mathbf{Z}\mathbf{X}(\bm{\tau})\hat{\mathbf{g}}_{n_\text{R}}(u)\right\}-\text{Tr}\{(\mathbf{Z}\mathbf{X}(\bm{\tau}))^H(\mathbf{F}\mathbf{F}^H)^{-1}\mathbf{Z}\mathbf{X}(\bm{\tau})\int_{\mathbf{g}_{n_\text{R}}}\mathbf{g}_{n_\text{R}}\mathbf{g}_{n_\text{R}}^H\mathcal{CN}(\mathbf{g}_{n_\text{R}};\hat{\mathbf{g}}_{n_\text{R}}(u),\mathbf{V}^{\mathbf{G}}_{n_\text{R}}(u))\}\\\label{equ_objc}		&\hspace{-0.5cm}=\arg\max_{\bm{\tau}}\frac{1}{\sigma_\text{N}^2}\sum_{n_\text{R}=1}^{N_\text{R}}2Re\left\{\mathbf{y}_{n_\text{R}}^H(\mathbf{F}\mathbf{F}^H)^{-1}\mathbf{Z}\mathbf{X}(\bm{\tau})\hat{\mathbf{g}}_{n_\text{R}}(u)\right\}-\text{Tr}\left\{(\mathbf{Z}\mathbf{X}(\bm{\tau}))^H(\mathbf{F}\mathbf{F}^H)^{-1}\mathbf{Z}\mathbf{X}(\bm{\tau})\left(\hat{\mathbf{g}}_{n_\text{R}}(u)\hat{\mathbf{g}}_{n_\text{R}}(u)^H+\mathbf{V}^{\mathbf{G}}_{n_\text{R}}(u)\right)\right\}.
			\end{align}
	\end{subequations}
	\hrulefill
\end{figure*}
Obtaining a closed-form solution to \eqref{equ_objc} is challenging due to the non-linear relationship between $\mathbf{X}$ and $\bm{\tau}$. Instead, a sub-optimal solution is proposed in Algorithm \ref{alg:receiver3}, where $\mathcal{G}(\bm{\tau})$ represents the objective function in \eqref{equ_objc}. Algorithm \ref{alg:receiver3} updates the entries in $\hat{\bm{\tau}}(u+1)$ one by one until all $K$ delays are updated, where the update of the $k$-th entry, denoted by $\hat{\tau}_k(u+1)$, depends on the previously $(k-1)$ updated entries. Steps 2-3 search $\hat{\tau}_k(u+1)$ in the range of $[\hat{\tau}_k(u)-\epsilon,\hat{\tau}_k(u)+\epsilon]$ with the stepsize $\frac{T_\text{S}}{\kappa}$ , where $\epsilon$ is a pre-defined positive number that is set small enough to reduce the searching cost, and $\kappa$ is an integer greater than $M_{\text{OSF}}$ to refine the search.
\begin{algorithm}[!htbp]
	\caption{Greedy search based delay calibration}
	\begin{algorithmic}[1]
		\STATEx \textbf{Input: }$\mathbf{Y},\mathbf{Z},\mathbf{F},\hat{\bm{\tau}}(u),\hat{\mathbf{g}}_{n_\text{R}}(u), \mathbf{V}^{\mathbf{G}}_{n_\text{R}}(u)$
		\FOR{$k=1:K$}
		\STATE $\hat{\bm{\tau}}_{\text{tmp}}=[\cdots,\hat{\tau}_{k-1}(u+1),\hat{\tau}_k(u),\hat{\tau}_{k+1}(u),\cdots]$
		\STATE $\hat{\tau}_k(u+1)=\arg\max_{\hat{\tau}_k(u)-\epsilon\leq\hat{\tau}_k(u)\leq\hat{\tau}_k(u)+\epsilon}\mathcal{G}(\hat{\bm{\tau}}_{\text{tmp}})$
		\ENDFOR
		\STATEx \textbf{Output: }$\hat{\bm{\tau}}(u+1)=\hat{\bm{\tau}}_{\text{tmp}}$
	\end{algorithmic}
	\label{alg:receiver3}
\end{algorithm}	

\subsection{Compressive Sensing with Colored Noise}\label{sec_delay}
With fixed $\hat{\bm{\tau}}(u)$, the posterior density $p(\mathbf{G}|\mathbf{Y};\hat{\bm{\tau}}(u))$ is calculated as 
\begin{equation}
	p(\mathbf{G}|\mathbf{Y};\hat{\bm{\tau}}(u))=\frac{p(\mathbf{Y}|\mathbf{G};\hat{\bm{\tau}}(u))p(\mathbf{G})}{p(\mathbf{Y};\hat{\bm{\tau}}(u))}.
\end{equation}	
Combining the likelihood $p(\mathbf{Y}|\mathbf{G};\hat{\bm{\tau}}(u))$ and the prior density $p(\mathbf{G})$,
several existing sparse recovery algorithms can be applied to estimate $p(\mathbf{G}|\mathbf{Y};\hat{\bm{\tau}}(u))$, such as Bernoulli-Gaussian generalized approximate message (BG-GAMP) \cite{6556987} and vector approximate message passing (VAMP) \cite{8713501}. However, the above algorithms cannot be directly applied due to oversampling. This leads to the following two challenges: first, the entries in $\mathbf{Z}\mathbf{X}(\hat{\bm{\tau}}(u))$ are not independently and identically distributed (i.i.d.); second, the filtered noise samples $\mathbf{F}\mathbf{W}$ are correlated with non-diagonal covariance  $\sigma_\text{N}^2\mathbf{F}\mathbf{F}^H$.
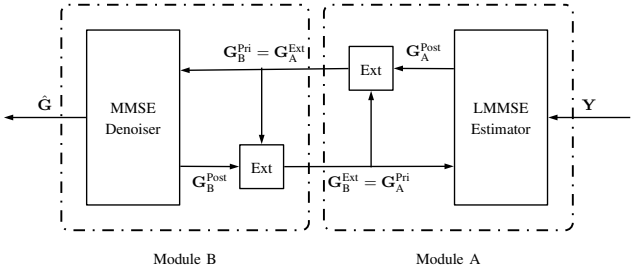
\begin{figure}[!htbp]
	\centering
	\resizebox{\columnwidth}{!}{\tikzset{%
	harddecision/.style={draw, 
		path picture={
			\pgfpointdiff{\pgfpointanchor{path picture bounding box}{north east}}%
			{\pgfpointanchor{path picture bounding box}{south west}}
			\pgfgetlastxy\x\y
			\tikzset{x=\x*.4, y=\y*.4}
			%
			\draw (-0.5,-0.5)--(0,-0.5)--(0,0.5)--(0.5,0.5);  
			\draw (-0.25,0)--(0.25,0);
	}}
}

\begin{tikzpicture}
	\node (c3) at (0,0) {};
	\node[dspsquare,right= 1.5cm of c3,minimum height=3.2cm,text height=2em,minimum width=1.7cm] (c1) {\footnotesize MMSE \\ \footnotesize Denoiser};
	\node (c4) at (0,0.9) {};
	\node[dspsquare,right= 6.3cm of c4] (c5) {\footnotesize Ext};
	\node (c6) at (0,-0.9) {};
	\node[dspsquare,right= 4.3cm of c6] (c7) {\footnotesize Ext};
	\node[dspsquare,right= 5cm of c1,minimum height=3.2cm,text height=2em,minimum width=1.7cm] (c2) {\footnotesize LMMSE \\ \footnotesize Estimator};
	
	\draw[dspconn] (c1) -- node[midway,above] {\footnotesize $\hat{\mathbf{G}}$} (c3);
	\draw[dspconn] (c5) -- node[midway,above] {\footnotesize $\mathbf{G}_\text{B}^{\text{Pri}}=\mathbf{G}_\text{A}^{\text{Ext}}$} (c1.45);
	\draw[dspconn] (c1.-46) -- node[midway,below] {\footnotesize $\mathbf{G}_\text{B}^{\text{Post}}$} (c7);
	\draw[dspconn] (c2.134) -- node[midway,above] {\footnotesize $\mathbf{G}_\text{A}^{\text{Post}}$} (c5);
	\draw[dspconn] (c7) -- node[midway,below] {\footnotesize $\mathbf{G}_\text{B}^{\text{Ext}}=\mathbf{G}_\text{A}^{\text{Pri}}$} (c2.-134);
	
	\node[below= 1.4cm of c5] (c10) {};
	\draw[dspconn] (c10) -- node[midway,above] {} (c5);
	
	\node[above= 1.4cm of c7] (c11) {};
	\draw[dspconn] (c11) -- node[midway,above] {} (c7);
	
	\node[right= 1.5cm of c2] (c20) {};
	\draw[dspconn] (c20) -- node[midway,above] {\footnotesize $\mathbf{Y}$} (c2);
	
	\tikzset{blue dotted2/.style={draw=black, line width=1pt,
			dash pattern=on 1pt off 4pt on 6pt off 4pt,
			inner sep=4.5mm, rectangle, rounded corners}};
	\node (first dotted box) [blue dotted2, fit = (c2) (c5)] {};
	\node at (first dotted box.south) [below, inner sep=4mm] {\footnotesize Module A};
	
	\tikzset{blue dotted2/.style={draw=black, line width=1pt,
			dash pattern=on 1pt off 4pt on 6pt off 4pt,
			inner sep=4.5mm, rectangle, rounded corners}};
	\node (first dotted box) [blue dotted2, fit = (c1) (c7)] {};
	\node at (first dotted box.south) [below, inner sep=4mm] {\footnotesize Module B};
\end{tikzpicture}}
	\caption{An illustration of the Turbo-CS algorithm in \cite{7912330}.}
	\label{fig:turbocs}
\end{figure}

In the following, we introduce a modified Turbo-CS algorithm for the estimation of $p(\mathbf{G}|\mathbf{Y};\hat{\bm{\tau}}(u))$ in oversampled system, where Turbo-CS  \cite{7912330} is an efficient iterative algorithm when the sensing matrices are partial orthogonal. As illustrated in Fig. \ref{fig:turbocs}, the Turbo-CS algorithm \cite{7912330} consists of two modules. Module A estimates $\mathbf{G}$ based on the measurement $\mathbf{Y}$, where the distribution of the $n_\text{R}$-th column of $\mathbf{G}$, denoted by  $\mathbf{g}_{n_\text{R}}$, is \emph{a priori} approximated as $\mathcal{CN}(\mathbf{g}_{\text{A},n_\text{R}}^{\text{Pri}},v_{\text{A},n_\text{R}}^{\text{Pri}}\mathbf{I}_K)$ from the previous iteration. The posterior distribution of $\mathbf{g}_{n_\text{R}}$ is still Gaussian with mean $\mathbf{g}_{\text{A},n_\text{R}}^{\text{Post}}$ and covariance $v_{\text{A},n_\text{R}}^{\text{Post}}\mathbf{I}_K$ calculated as 
\begin{subequations}\label{equ_postG}
	\begin{align}
			\nonumber&\mathbf{g}^{\text{Post}}_{\text{A},n_\text{R}} =\mathbf{g}^{\text{Pri}}_{\text{A},n_\text{R}}+\left(\mathbf{Z}\mathbf{X}(\hat{\bm{\tau}}(u))\right)^H\left(\mathbf{Z}\mathbf{X}(\hat{\bm{\tau}}(u))(\mathbf{Z}\mathbf{X}(\hat{\bm{\tau}}(u)))^H\right.\\&\hspace{0.3cm}\left.+\sigma_\text{N}^2/v_{\text{A},n_\text{R}}^{\text{Pri}}\mathbf{F}\mathbf{F}^H\right)^{-1}(\mathbf{y}_{n_\text{R}}-\mathbf{Z}\mathbf{X}(\hat{\bm{\tau}}(u))\mathbf{g}^{\text{Pri}}_{\text{A},n_\text{R}})\\\nonumber
			&v^{\text{Post}}_{\text{A},n_\text{R}} = v_{\text{A},n_\text{R}}^{\text{Pri}}-v^{\text{Pri}}_{\text{A},n_\text{R}}\text{Tr}\left\{(\mathbf{Z}\mathbf{X}(\hat{\bm{\tau}}(u)))^H\left(\mathbf{Z}\mathbf{X}(\hat{\bm{\tau}}(u))\right.\right.\\&\hspace{0.3cm}\left.\left.\times (\mathbf{Z}\mathbf{X}(\hat{\bm{\tau}}(u)))^H+\sigma_\text{N}^2/v_{\text{A},n_\text{R}}^{\text{Pri}}\mathbf{F}\mathbf{F}^H\right)^{-1}\mathbf{Z}\mathbf{X}(\hat{\bm{\tau}}(u))\right\}/K.
	\end{align}
\end{subequations}
The matrix inversion involved in \eqref{equ_postG} can be rewritten as
\begin{align}\nonumber
	&\left(\mathbf{Z}\mathbf{X}(\hat{\bm{\tau}}(u))(\mathbf{Z}\mathbf{X}(\hat{\bm{\tau}}(u)))^H+\sigma_\text{N}^2/v_{\text{A},n_\text{R}}^{\text{Pri}}\mathbf{F}\mathbf{F}^H\right)^{-1}\\\nonumber& = \left(\mathbf{F}\left(\mathbf{F}\mathbf{X}(\hat{\bm{\tau}}(u))\mathbf{X}(\hat{\bm{\tau}}(u))^H\mathbf{F}^H+\sigma_\text{N}^2/v_{\text{A},n_\text{R}}^{\text{Pri}}\mathbf{I}_{LM_{\text{OSF}}}\right)\mathbf{F}^H\right)^{-1}\\\nonumber&=\left(\mathbf{F}\left(\mathbf{Q\Sigma Q}^H+\sigma_\text{N}^2/v_{\text{A},n_\text{R}}^{\text{Pri}}\mathbf{QQ}^H\right)\mathbf{F}^H\right)^{-1}\\&=((\mathbf{FQ})^H)^{-1}(\bm{\Sigma }+\sigma_\text{N}^2/v_{\text{A},n_\text{R}}^{\text{Pri}}\mathbf{I}_{LM_{\text{OSF}}})^{-1}(\mathbf{FQ})^{-1},\label{simpli}
\end{align}
where $\mathbf{Z}=\mathbf{FF}^H$ since the transmit and receive filters are matched, $\mathbf{F}\mathbf{X}(\hat{\bm{\tau}}(u))\mathbf{X}(\hat{\bm{\tau}}(u))^H\mathbf{F}^H=\mathbf{Q\Sigma Q}^H$ by using the eigenvalue decomposition, $\mathbf{Q}$ is a unitary matrix and $\mathbf{\Sigma}$ is a diagonal matrix with diagonal elements being the eigenvalues. Through the modifications in \eqref{simpli}, the matrix inversion related to $v_{\text{A},n_\text{R}}^{\text{Pri}}$ is simplified to the inverse of diagonal elements of $(\bm{\Sigma }+\sigma_\text{N}^2/v_{\text{A},n_\text{R}}^{\text{Pri}}\mathbf{I}_{LM_{\text{OSF}}})$, which significantly reduces the computational costs.

The extrinsic mean $\mathbf{g}_{\text{A},n_\text{R}}^{\text{Ext}}$ and variance $v_{\text{A},n_\text{R}}^{\text{Ext}}\mathbf{I}_K$ are given, respectively, as \cite{539767}
\begin{subequations}\label{equ_Gextrinsic}
	\begin{align}
		\mathbf{g}^{\text{Ext}}_{\text{A},n_\text{R}}&=v^{\text{Ext}}_{\text{A},n_\text{R}}(\mathbf{g}^{\text{Post}}_{\text{A},n_\text{R}}/v^{\text{Post}}_{\text{A},n_\text{R}}-\mathbf{g}^{\text{Pri}}_{\text{A},n_\text{R}}/v^{\text{Pri}}_{\text{A},n_\text{R}})\\
		v^{\text{Ext}}_{\text{A},n_\text{R}} &= (1/v^{\text{Post}}_{\text{A},n_\text{R}}-1/v^{\text{Pri}}_{\text{A},n_\text{R}})^{-1}.
	\end{align}
\end{subequations}

Module B estimates $\mathbf{G}$ by combining the prior distribution $\mathbf{g}_{n_\text{R}}\sim\mathcal{CN}(\bm{0}_{K\times 1},\mathbf{\Gamma})$ and the message from Module A, where $\mathbf{\Gamma}\in\mathbb{R}^{K\times K}$ is a diagonal matrix with the $k$-th diagonal element being $\gamma_k$. The \emph{a posteriori} mean $\mathbf{g}^{\text{Post}}_{\text{B},n_\text{R}}$ and variance $v^{\text{Post}}_{\text{B},n_\text{R}}\mathbf{I}_K$ are calculated, respectively, as 
\begin{subequations}\label{equ_postmeanG}
	\begin{align}
	\textsl{g}^{\text{Post}}_{\text{B},k,n_\text{R}}&= \nu_k\frac{\gamma_k}{\gamma_k+v^{\text{Pri}}_{\text{B},n_\text{R}}}\textsl{g}^{\text{Pri}}_{\text{B},k,n_\text{R}}\\\nonumber
	v^{\text{Post}}_{\text{B},n_\text{R}} &=\sum_{k=1}^{K} \frac{\nu_k}{K}\left(\left(\frac{\textsl{g}^{\text{Pri}}_{\text{B},k,n_\text{R}}/v^{\text{Pri}}_{\text{B},n_\text{R}}}{1/\gamma_k+1/v^{\text{Pri}}_{\text{B},n_\text{R}}}\right)^2+\frac{1}{1/\gamma_k+1/v^{\text{Pri}}_{\text{B},n_\text{R}}}\right)\\&\hspace{0.5cm}- \left(\textsl{g}^{\text{Post}}_{\text{B},k,n_\text{R}}\right)^2,
	\end{align}
\end{subequations}
where $\nu_k$ is calculated as 
\begin{equation}\label{pbi}
\resizebox{\columnwidth}{!}{$\nu_k=\left(1+\left(\frac{\rho}{1-\rho}\prod_{n_\text{R}=1}^{N_\text{R}}\frac{\mathcal{CN}\left(\textsl{g}^{\text{Pri}}_{\text{B},k,n_\text{R}};0,\gamma_k+v^{\text{Pri}}_{\text{B},n_\text{R}}\right)}{\mathcal{CN}\left(\textsl{g}^{\text{Pri}}_{\text{B},k,n_\text{R}};0,v^{\text{Pri}}_{\text{B},n_\text{R}}\right)}\right)^{-1}\right)^{-1}$}
\end{equation}
with $\textsl{g}^{\text{Pri}}_{\text{B},k,n_\text{R}}$ being the $k$-th element of $\mathbf{g}^{\text{Pri}}_{\text{B},n_\text{R}}$.

Moreover, since the probability of active users varies from window by window, the update of $\rho$ is calculated as \cite{6556987}
\begin{equation}\label{equ_rhoup}
	\hat{\rho} = \frac{1}{K}\sum_{k=1}^{K}\nu_{k}.
\end{equation}
The detailed operations are outlined
in Algorithm \ref{alg:turbocsmmv}, where $\varepsilon_1$ and $\varepsilon_2$ are predefined small values.
\begin{algorithm}
	\caption{Turbo-CS with colored noise}
	\begin{algorithmic}[1]
		\STATEx \textbf{Input: }$\mathbf{Y},\mathbf{Z},\mathbf{X}(\hat{\bm{\tau}}(u)),\mathbf{F},\mathbf{\Gamma},\sigma_\text{N}^2$
		\STATEx \textbf{Initialization: }$\forall n_\text{R}: $$\mathbf{g}^{\text{Pri}}_{\text{A},n_\text{R}}=\mathbf{0}_{K\times 1},v^{\text{Pri}}_{\text{A},n_\text{R}}=1,\rho(0)=0$
		\FOR{$j_{\text{O}}=1:J_{\text{O}}$}
		\FOR{$j_{\text{I}}=1:J_{\text{I}}$}
		\STATEx \hspace{0.8cm}\% Module A
		\STATE $\forall n_\text{R}: $ calculate the posterior mean $\mathbf{g}^{\text{Post}}_{\text{A},n_\text{R}}(j_{\text{I}})$ and variance $v^{\text{Post}}_{\text{A},n_\text{R}}(j_{\text{I}})$ in \eqref{equ_postG}
		\STATE $\forall n_\text{R}: $ calculate the extrinsic mean $\mathbf{g}^{\text{Ext}}_{\text{A},n_\text{R}}(j_{\text{I}})$ and variance $v^{\text{Ext}}_{\text{A},n_\text{R}}(j_{\text{I}})$ in \eqref{equ_Gextrinsic}
		\STATE $\forall n_\text{R}: $ $ \mathbf{g}^{\text{Pri}}_{\text{B},n_\text{R}}(j_{\text{I}}) = \mathbf{g}^{\text{Ext}}_{\text{A},n_\text{R}}(j_{\text{I}}),v^{\text{Pri}}_{\text{B},n_\text{R}}(j_{\text{I}}) = v^{\text{Ext}}_{\text{A},n_\text{R}}(j_{\text{I}}),$
		\STATEx \hspace{0.8cm}\% Module B
		\STATE $\forall n_\text{R}: $ calculate the posterior mean $\mathbf{g}^{\text{Post}}_{\text{B},n_\text{R}}(j_{\text{I}})$ and variance $v^{\text{Post}}_{\text{B},n_\text{R}}(j_{\text{I}})$ in \eqref{equ_postmeanG}
		\STATE $\forall n_\text{R}: $ calculate the extrinsic mean $\mathbf{g}^{\text{Ext}}_{\text{B},n_\text{R}}(j_{\text{I}})$ and variance $v^{\text{Ext}}_{\text{B},n_\text{R}}(j_{\text{I}})$ similar to \eqref{equ_Gextrinsic}
		\STATE $\forall n_\text{R}: $ $ \mathbf{g}^{\text{Pri}}_{\text{A},n_\text{R}}(j_{\text{I}}+1) = \mathbf{g}^{\text{Ext}}_{\text{B},n_\text{R}}(j_{\text{I}}),v^{\text{Pri}}_{\text{A},n_\text{R}}(j_{\text{I}}+1) = v^{\text{Ext}}_{\text{B},n_\text{R}}(j_{\text{I}})$
		\State \textbf{if}  $\sum_{n_\text{R}}|v^{\text{Post}}_{\text{B},n_\text{R}}(j_{\text{I}})-v^{\text{Post}}_{\text{B},n_\text{R}}(j_{\text{I}}-1)|<\varepsilon_1$ \textbf{stop}
		\ENDFOR
		\STATE Update $\rho(j_{\text{O}})$ in \eqref{equ_rhoup}
		\State \textbf{if} $|\rho(j_{\text{O}})-\rho(j_{\text{O}}-1)|<\varepsilon_2$ \textbf{stop}
		\ENDFOR
		\STATEx \textbf{Output:  }$\forall n_\text{R}: $ $\hat{\mathbf{g}}_{n_\text{R}}(u)=\mathbf{g}^{\text{Post}}_{\text{B},n_\text{R}}$, $\mathbf{V}^{\mathbf{G}}_{n_\text{R}}(u)=v^{\text{Post}}_{\text{B},n_\text{R}}\mathbf{I}_K$
	\end{algorithmic}
	\label{alg:turbocsmmv}
\end{algorithm}

\subsection{Overall Algorithm}\label{subsec_overall}
From the above discussions, we adopt an EM-based framework to compute $\bm{\tau}$ iteratively. With the estimated $\hat{\bm{\tau}}$, the equivalent channel matrix is calculated in \eqref{equ_MMSE} to obtain $\hat{\mathbf{G}}$. Then, the $k$-th user is active if the power of the $k$-th row of $\hat{\mathbf{G}}$ is larger than a predetermined threshold $\eta_\text{th}$, i.e.,
\begin{equation}
	\hat{\alpha}_k=\begin{cases}
		1,\quad \sum_{n_\text{R}=1}^{N_\text{R}}|\hat{\textsl{g}}_{k,n_\text{R}}|^2>\eta_\text{th},\\
		0,\quad \sum_{n_\text{R}=1}^{N_\text{R}}|\hat{\textsl{g}}_{k,n_\text{R}}|^2\leq\eta_\text{th}.
	\end{cases}
\end{equation}
The proposed UAD-DC method is summarized in Algorithm \ref{alg_jdcuad}, where $\varepsilon_3$ is a predefined small value.

In Algorithm \ref{alg:turbocsmmv}, the time delay $\bm{\tau}$ is initially estimated using cross-correlation between the received samples $\mathbf{Y}$ and the $i$-th local preamble sequence ${\mathbf{x}_\text{LO}}_i\in\mathbb{C}^{M_{\text{OSF}}N_{\text{PL}}\times1}$, $i\in \{1,\cdots,N_{\text{P}}\}$ with $N_{\text{P}}$ being the total number of available orthogonal preamble sequences. The $m$-th element of the cross-correlated sequence $\mathbf{r}_{i}$, denoted by $r_{i,m}$, is calculated as
\begin{align}\label{equ_cross}
		\nonumber&r_{i,m}=\sum_{n_\text{R}=1}^{N_\text{R}}\left|\sum_{n=0}^{LM_{\text{OSF}}-m-1}y_{n_\text{R},n+m}{x^*_\text{LO}}_{i,n}\right|, \\&\hspace{2cm}
		m=0,\cdots,LM_{\text{OSF}}+M_{\text{OSF}}N_{\text{PL}}-1
\end{align} 
where $y_{n_\text{R},n+m}$ is the $(n+m)$-th element of $\mathbf{y}_{n_\text{R}}$, ${x_\text{LO}}_{i,n}$ is the $n$-th element of ${\mathbf{x}_\text{LO}}_i$, and $(\cdot)^*$ denotes the operation of complex conjugate. Letting $\eta'_\text{th}$ be a pre-defined threshold, 
the values $\{r_{i,m}\}$ fulfilling $r_{i,m}>\eta'_\text{th}$ are identified, and the corresponding time delays are stored in $\hat{\bm{\tau}}(1)$. 


In general, UAD-DC is composed of the cross-correlation in steps 1-4, Algorithm \ref{alg:turbocsmmv} in step 6 and Algorithm \ref{alg:receiver3} in step 7. The cross-correlation requires $M^2_{\text{OSF}}N_\text{R}N_{\text{P}}N_\text{PL}L$ multiplications. The complexity of Algorithm \ref{alg:turbocsmmv} is dominated by \eqref{equ_postG}, \eqref{equ_Gextrinsic} and \eqref{equ_postmeanG}. Due to matrix multiplications and inversions, \eqref{equ_postG} requires $1.5(LM_{\text{OSF}})^3+6K(LM_{\text{OSF}})^2+(3K+1)(LM_{\text{OSF}})$ multiplications, where the complexity for inverting an $n$-by-$n$ matrix is $0.5n^3$ \cite{6710599}. Equations \eqref{equ_Gextrinsic} and \eqref{equ_postmeanG} require $3K$ and $(2N_\text{R}^2+17)K$ multiplications, respectively. In total, Algorithm \ref{alg:turbocsmmv} requires $1.5(LM_{\text{OSF}})^3+5K(LM_{\text{OSF}})^2+J_{\text{O}}J_{\text{I}}N_\text{R}(K(LM_{\text{OSF}})^2+(1+3K)(LM_{\text{OSF}})+(2N_\text{R}^2+20)K)$ multiplications. The complexity of Algorithm \ref{alg:receiver3} is dominated by \eqref{equ_optmax}, which requires $2K(LM_{\text{OSF}})^2+K^2(LM_{\text{OSF}})+(LM_{\text{OSF}}K+K^3+K^2)N_\text{R}$ multiplications. In total, Algorithm \ref{alg:receiver3} requires $2K\epsilon(2K(LM_{\text{OSF}})^2+K^2(LM_{\text{OSF}})+(KLM_{\text{OSF}}+K^3+K^2)N_\text{R})$ multiplications. To summarize, UAD-DC requires $M^2_{\text{OSF}}N_\text{R}N_{\text{P}}N_\text{PL}L+U(1.5(LM_{\text{OSF}})^3+5K(LM_{\text{OSF}})^2+J_{\text{O}}J_{\text{I}}N_\text{R}(K(LM_{\text{OSF}})^2+(1+3K)(LM_{\text{OSF}})+(2N_\text{R}^2+20)K)+2K\epsilon(2K(LM_{\text{OSF}})^2+K^2(LM_{\text{OSF}})+(KLM_{\text{OSF}}+K^3+K^2)N_\text{R}))$ multiplications.
\begin{algorithm}[!htbp] 
	\caption{User activity detection with delay-calibration (UAD-DC)}
	\begin{algorithmic}[1] 
		\STATEx \textbf{Input: }$\mathbf{Y}$, $\mathbf{Z}$, $\mathbf{F}$, $\mathbf{\Gamma}$, $\sigma_\text{N}^2$
		\STATEx \% Initialization
		\FOR{$i=1:N_{\text{P}}$} 
		\STATE Calculate the cross-correlation in \eqref{equ_cross} 
		\STATE Find and store time delays based on $\eta'_{\text{th}}$
		\ENDFOR
		\STATEx \% Main part
		\FOR{$u=1:U$} 
		\STATE $\forall n_\text{R}: $ use Algorithm \ref{alg:turbocsmmv} to obtain $\hat{\mathbf{g}}_{n_\text{R}}(u)$ and $\mathbf{V}^{\mathbf{G}}_{n_\text{R}}(u)$ 
		\STATE Use Algorithm \ref{alg:receiver3} to obtain $\hat{\bm{\tau}}(u+1)$
		\State \textbf{if} $|\hat{\bm{\tau}}(u+1)-\hat{\bm{\tau}}(u)|<\varepsilon_3$ \textbf{stop}
		\ENDFOR
		\STATEx \textbf{Output: }$\hat{\bm{\tau}}, \hat{\mathbf{G}}$
	\end{algorithmic} 
	\label{alg_jdcuad}
\end{algorithm} 

\section{Bayesian Cram\'er-Rao Bound}
As illustrated in Section \ref{sec2em}, the calculation of E-step in \eqref{equ_expec} is based on the posterior density $p(\mathbf{G}|\mathbf{Y};\hat{\bm{\tau}}(u))$ with mean and variance obtained from Algorithm \ref{alg:turbocsmmv}. From \eqref{equ_MMSE}, the mean of $p(\mathbf{G}|\mathbf{Y};\hat{\bm{\tau}}(u))$ is exactly the MMSE estimate of $\mathbf{G}$ in the $u$-th EM iteration, namely $\hat{\mathbf{G}}(u)$. In this section, we develop a mean square error lower bound for $\mathbf{G}$ based on BCRB to verify the efficacy of Algorithm \ref{alg:turbocsmmv}. Moreover, the derived BCRB can be treated as an objective function for the  optimization of pulse shaping filter to improve the user activity detection performance in oversampled system, which will be discussed in Section \ref{optp}.

The non-sparse signal model in \eqref{equ_sys1} is written as 
\begin{equation}\label{equ_bcrb1}
		\mathbf{Y}=
		\mathbf{Z}\mathbf{X}_{\text{real}}(\bm{\tau}_{\text{real}})\mathbf{G}_{\text{real}} + \mathbf{FW},
\end{equation}
where $\mathbf{X}_{\text{real}}(\bm{\tau}_{\text{real}})\in\mathbb{C}^{LM_{\text{OSF}}\times K_{\text{real}}}$ is constructed by preamble sequences transmitted from $K_{\text{real}}$ active users with time delays $\bm{\tau}_{\text{real}}\in\mathbb{R}^{K_{\text{real}}\times 1}$, and $\mathbf{G}_{\text{real}}\in\mathbb{C}^{K_{\text{real}}\times N_\text{R}}$ is the corresponding channel matrix. By stacking the columns of \eqref{equ_bcrb1} sequentially on top of one another, we obtain
\begin{equation}\label{equ_bcrb2}
	\mathbf{y}=
	(\mathbf{I}_{N_\text{R}}\otimes\mathbf{Z}\mathbf{X}_{\text{real}}(\bm{\tau}_{\text{real}}))\mathbf{g}_{\text{real}} + (\mathbf{I}_{N_\text{R}}\otimes\mathbf{F})\mathbf{w},
\end{equation}
where $\mathbf{g}_{\text{real}}\in\mathbb{C}^{K_{\text{real}}N_\text{R}\times1}$ and $\mathbf{w}\in\mathbb{C}^{LM_{\text{OSF}}N_\text{R}\times1}$ are vector forms of $\mathbf{G}_{\text{real}}$ and $\mathbf{W}$, respectively. The Bayesian information matrix (BIM) \cite{Trees} is defined as
\begin{equation}
	\begin{aligned}
		\mathbf{J}_{\mathbf{y}}(\mathbf{g}_{\text{real}})=\mathbf{J}^\mathrm{D}_{\mathbf{y}}(\mathbf{g}_{\text{real}})+\mathbf{J}^\mathrm{P}_{\mathbf{y}}(\mathbf{g}_{\text{real}}).
	\end{aligned}
	\label{equ_bayesian}
\end{equation}
Specifically, the $(i,j)$-th elements of $\mathbf{J}^\mathrm{D}_{\mathbf{y}}(\mathbf{g}_{\text{real}})$ and $\mathbf{J}^\mathrm{P}_{\mathbf{y}}(\mathbf{g}_{\text{real}})$ are calculated, respectively, as 
\begin{subequations}
	\begin{align}
		[\mathbf{J}^\mathrm{D}_{\mathbf{y}}(\mathbf{g}_{\text{real}})]_{i,j}&=E_{\mathbf{y}|\mathbf{g}_{\text{real}}}\left\{\frac{\partial \ln p(\mathbf{y}|\mathbf{g}_{\text{real}})}{\partial \textsl{g}_{\text{real},i}}\frac{\partial \ln p(\mathbf{y}|\mathbf{g}_{\text{real}})}{\partial \textsl{g}_{\text{real},j}}\right\}
		\label{equ_bayesianjd}\\
		[\mathbf{J}^\mathrm{P}_{\mathbf{y}}(\mathbf{g}_{\text{real}})]_{i,j} &= E_{\mathbf{g}_{\text{real}}}\left\{\frac{\partial \ln p(\mathbf{g}_{\text{real}})}{\partial \textsl{g}_{\text{real},i}}\frac{\partial \ln p(\mathbf{g}_{\text{real}})}{\partial \textsl{g}_{\text{real},j}}\right\}
		\label{equ_bayesianjp}
	\end{align}
\end{subequations}
with $\textsl{g}_{\text{real},i}$ and $\textsl{g}_{\text{real},j}$ being the $i$-th and $j$-th elements of $\mathbf{g}_{\text{real}}$, respectively. Since $p(\mathbf{y}|\mathbf{g}_{\text{real}})\sim\mathcal{CN}((\mathbf{I}_{N_\text{R}}\otimes\mathbf{Z}\mathbf{X}_{\text{real}}(\bm{\tau}_{\text{real}}))\mathbf{g}_{\text{real}}, \sigma_\text{N}^2(\mathbf{I}_{N_\text{R}}\otimes\mathbf{FF}^H))$, \eqref{equ_bayesianjd} is calculated as
\begin{align}\label{equjd}
	\nonumber[\mathbf{J}^\mathrm{D}_{\mathbf{y}}(\mathbf{g}_{\text{real}})]_{i,j}=&\mathbf{e}^H_i(\mathbf{I}_{N_\text{R}}\otimes\mathbf{ZX}_{\text{real}}(\bm{\tau}_{\text{real}}))^H(\sigma_\text{N}^2\mathbf{I}_{N_\text{R}}\otimes\mathbf{FF}^H)^{-1}\\\nonumber&\times(\mathbf{I}_{N_\text{R}}\otimes\mathbf{ZX}_{\text{real}}(\bm{\tau}_{\text{real}}))\mathbf{e}_j\\&\hspace{-2.4cm}=\frac{1}{\sigma_\text{N}^2}\mathbf{e}^H_i(\mathbf{I}_{N_\text{R}}\otimes\mathbf{X}_{\text{real}}^H(\bm{\tau}_{\text{real}})\mathbf{Z}(\mathbf{FF}^H)^{-1}\mathbf{ZX}_{\text{real}}(\bm{\tau}_{\text{real}}))\mathbf{e}_j,
\end{align}
where the unit vector $\mathbf{e}_i$ and $\mathbf{e}_j$ are all-zero vectors except that the $i$-th and $j$-th element are ones. The prior density $p(\mathbf{g}_{\text{real}})$ follows a complex Gaussian distribution $\mathcal{CN}(\mathbf{0}_{K_{\text{real}}N_\text{R}\times 1},\mathbf{I}_{N_\text{R}}\otimes\mathbf{\Gamma}_{\text{real}})$, where $\mathbf{\Gamma}_{\text{real}}\in\mathbb{R}^{K_{\text{real}}\times K_{\text{real}}}$ is a diagonal matrix similar to $\mathbf{\Gamma}$. Then, $\ln p(\mathbf{g}_{\text{real}})$ yields
\begin{align}
	\nonumber\ln p(\mathbf{g}_{\text{real}}) =& -\frac{K_{\text{real}}N_\text{R}}{2}\ln(\pi)-\frac{N_\text{R}}{2}\sum_{k=1}^{K_{\text{real}}}\ln\gamma_k\\&-\mathbf{g}_{\text{real}}^H(\mathbf{I}_{N_\text{R}}\otimes\mathbf{\Gamma}_{\text{real}})^{-1}\mathbf{g}_{\text{real}}.
\end{align}
By the definition in \eqref{equ_bayesianjp}, we have
\begin{equation}\label{equjp}
		[\mathbf{J}^\mathrm{P}_{\mathbf{y}}(\mathbf{g}_{\text{real}})]_{i,j} = 2(\mathbf{I}_{N_\text{R}}\otimes\mathbf{\Gamma}_{\text{real}}^{-1}).
\end{equation}
Inserting \eqref{equjd} and \eqref{equjp} into \eqref{equ_bayesian}, we obtain
\begin{align}
	\nonumber\mathbf{J}_{\mathbf{y}}(\mathbf{g}_{\text{real}})&=\frac{1}{\sigma_\text{N}^2}(\mathbf{I}_{N_\text{R}}\otimes\mathbf{X}_{\text{real}}^H(\bm{\tau}_{\text{real}})\mathbf{Z}(\mathbf{FF}^H)^{-1}\mathbf{ZX}_{\text{real}}(\bm{\tau}_{\text{real}}))\\&\nonumber\hspace{0.4cm}+2(\mathbf{I}_{N_\text{R}}\otimes\mathbf{\Gamma}_{\text{real}}^{-1})\\&\hspace{-1.3cm}=\mathbf{I}_{N_\text{R}}\otimes(\frac{1}{\sigma_\text{N}^2}\mathbf{X}_{\text{real}}^H(\bm{\tau}_{\text{real}})\mathbf{Z}(\mathbf{FF}^H)^{-1}\mathbf{ZX}_{\text{real}}(\bm{\tau}_{\text{real}})+2\mathbf{\Gamma}_{\text{real}}^{-1}).
\end{align}
The BCRB is given by
\begin{align}
	\nonumber&\text{Var}\{\mathbf{g}_{\text{real}}\}=\text{Tr}\{\mathbf{J}^{-1}_{\mathbf{y}}(\mathbf{g}_{\text{real}})\}\\\nonumber&=N_\text{R}\text{Tr}\{(\frac{1}{\sigma_\text{N}^2}\mathbf{X}_{\text{real}}^H(\bm{\tau}_{\text{real}})\mathbf{Z}(\mathbf{FF}^H)^{-1}\mathbf{ZX}_{\text{real}}(\bm{\tau}_{\text{real}})+2\mathbf{\Gamma}_{\text{real}}^{-1})^{-1}\}\\\label{equ_bcrbfin}&=N_\text{R}\text{Tr}\{(\frac{1}{\sigma_\text{N}^2}\mathbf{X}_{\text{real}}^H(\bm{\tau}_{\text{real}})\mathbf{ZX}_{\text{real}}(\bm{\tau}_{\text{real}})+2\mathbf{\Gamma}_{\text{real}}^{-1})^{-1}\},
\end{align}
where the last equality comes from $\mathbf{Z}=\mathbf{FF}^H$, since the transmit and receive filters are matched.
The variance of the estimator $\hat{\mathbf{g}}$ is lower bounded by \cite{Trees}
\begin{equation}
	\text{Var}\{\hat{\mathbf{g}}\}\geq\text{Var}\{\mathbf{g}_{\text{real}}\},
\end{equation}
where $\hat{\mathbf{g}}$ is a vector form of $\hat{\mathbf{G}}$ similar to $\mathbf{g}_{\text{real}}$. 

\section{Pulse Shaping Filter Optimization}\label{optp}
To improve the performance of user activity detection, we propose a waveform optimization of the pulse shaping filter based on the derived BCRB.
\subsection{Preliminary Discussions} \label{chap_trreceiver}
In wireless communication systems, pulse shaping filters are used to transfer digital data through a band-limited channel by converting it to an equivalent modulated analog signal. Basically, there are two design criteria for the pulse shaping filters, namely the bandwidth in frequency domain and the ISI in time domain. Good pulse shaping filters can limit the bandwidth of the transmission and filter the pulses with low ISI. Some widely used pulse shaping filters are:
\begin{itemize}
	\item RRC filter \cite{wcp}: It is frequently used in a digital communication system to perform pulse shaping and matched filtering. This type of filter helps to avoid the ISI. The combined response of two such filters is that of the raised-cosine (RC) filter, which is ideally band-limited. The RRC filter is characterized by two values, namely the roll-off factor $\beta$ and the symbol duration $T_\text{S}$.
	\item Gaussian filter \cite{wcp}: The impulse response of a Gaussian filter is a Gaussian function. A Gaussian filter has the advantage that its Fourier transform is also a Gaussian distribution centered around the zero frequency. One can then control the effectiveness of the low-pass nature of the filter by adjusting its variance $\sigma_\text{G}^2$. 
\end{itemize}
Several examples are shown in Fig. \ref{fig_allfilter}, where $z(t)$ represents the convolution of the pulse shaping and its corresponding matched filter. 
From Fig. \ref{fig_timed}, the Gaussian filter with $\sigma_\text{G}^2=0.1$ causes the lowest ISI when it is sampled at the oversampling rate. This leads to the best BCRB performance, as shown in Fig. \ref{fig_filterperformance}, where the simulation parameters are outlined in Section \ref{numer}. However, the Gaussian filter with $\sigma_\text{G}^2=0.1$ consumes the largest bandwidth from Fig. \ref{fig_frequencyd}, which is not spectrally efficient. 

Based on the above discussions, it is important to design a filter that is suitable for the oversampled system. From \eqref{equ_bcrbfin}, the BCRB is related to the matrix $\mathbf{Z}$, which is constructed by $z(t)$ at different time instants according to \eqref{eq_zform}. In the following, we present a pulse shaping filter optimization algorithm for optimizing $z(t)$ with the objective of minimizing the BCRB under the constraint of limited spectral bandwidth.



\begin{figure}[!htbp]
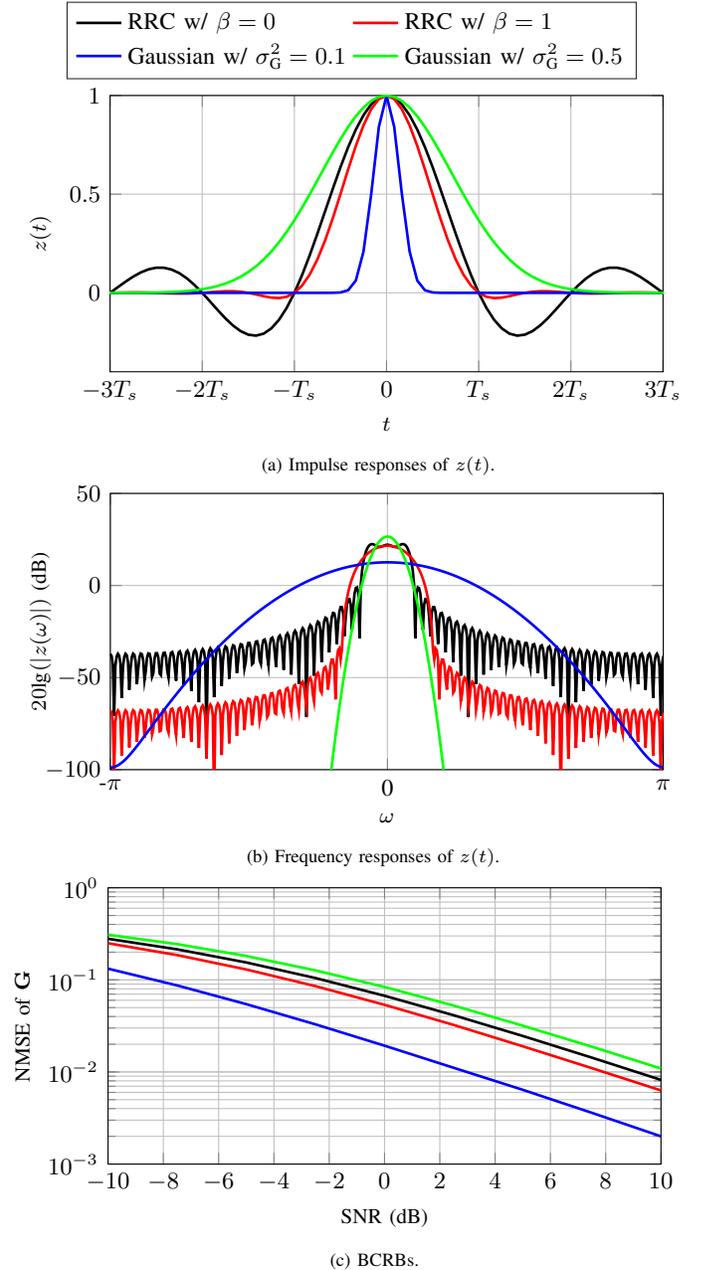

	\centering
	\begin{minipage}{0.83\columnwidth}
	\centering
	\begin{tikzpicture} 
	\begin{axis}[%
		hide axis,
		xmin=10,
		xmax=40,
		ymin=0,
		ymax=0.4,
		ylabel style={font=\scriptsize},
		xlabel style={font=\scriptsize},
		legend columns=2,
		legend style={draw=white!15!black,legend cell align=left}
		]
		
		\addlegendimage{black,line width=1.0pt}
		\addlegendentry{RRC w/ $\beta=0$};
		
		\addlegendimage{red,line width=1.0pt}
		\addlegendentry{RRC w/ $\beta=1$};
		
		\addlegendimage{blue,line width=1.0pt}
		\addlegendentry{Gaussian w/ $\sigma_\text{G}^2=0.1$};
		
		\addlegendimage{green,line width=1.0pt}
		\addlegendentry{Gaussian w/ $\sigma_\text{G}^2=0.5$};
		
	\end{axis}
\end{tikzpicture}
	\subcaptionbox{Impulse responses of $z(t)$.\label{fig_timed}}{\hspace{-0.6cm}\input{timedomain.tex}}
	\subcaptionbox{Frequency responses of $z(t)$.\label{fig_frequencyd}}{\hspace{-0.6cm}\input{frequencydomain.tex}}
	\subcaptionbox{BCRBs.\label{fig_filterperformance}}{\hspace{-0.8cm}\begin{tikzpicture}
	
\begin{axis}[%
	width=\columnwidth,
	height=.5\columnwidth,
	at={(0in,0in)},
	scale only axis,
	xmin=-10,
	xmax=10,
	xlabel={SNR (dB)},
	xtick={-10,-8,-6,-4,-2,0,2,4,6,8,10},
	xmajorgrids,
	ylabel style={font=\footnotesize},
	xlabel style={font=\footnotesize},
	ymode=log,
	ymin=0.001,
	ymax=1,
	yminorticks=true,
	ylabel={NMSE of $\mathbf{G}$},
	axis background/.style={fill=white},
	ymajorgrids,
	yminorgrids,
	]
	
	\addplot [color=red, solid,line width=1pt]
	table[row sep=crcr]{%
		-10	  0.2498\\
		-7.5  0.1858\\
		-5    0.1297\\
		-2.5  0.0855\\
		0     0.0536\\
		2.5   0.0324\\
		5     0.0190\\
		7.5   0.0110\\
		10    0.0063\\
	};

	\addplot [color=black, solid,line width=1pt]
	table[row sep=crcr]{%
		-10	  0.2795\\
		-7.5  0.2146\\
		-5    0.1545\\
		-2.5  0.1047\\
		0     0.0672\\
		2.5   0.0413\\
		5     0.0246\\
		7.5   0.0143\\
		10    0.0082\\
	};

	\addplot [color=blue, solid,line width=1pt]
	table[row sep=crcr]{%
		-10	  0.1323\\
		-7.5  0.0870\\
		-5    0.0545\\
		-2.5  0.0329\\
		0     0.0193\\
		2.5   0.0111\\
		5     0.0064\\
		7.5   0.0036\\
		10    0.0020\\
	};

	\addplot [color=green, solid,line width=1pt]
	table[row sep=crcr]{%
		-10	  0.3087\\
		-7.5  0.2445\\
		-5    0.1816\\
		-2.5  0.1268\\
		0     0.0837\\
		2.5   0.0526\\
		5     0.0318\\
		7.5   0.0188\\
		10    0.0109\\
	};
\end{axis}
\end{tikzpicture}
	\end{minipage}
	\caption{Performance comparisons of different types of pulse shaping filters, where $z(t)$ represents the convolution of the pulse shaping and its corresponding matched filter.}\label{fig_allfilter}
\end{figure}

\subsection{Waveform Optimization}
The goal of the waveform optimization is to minimize the BCRB derived in \eqref{equ_bcrbfin} under the pre-defined spectral bandwidth. This yields the following optimization problem:
\begin{equation}\label{equ_opt}
	\begin{split}
		&\underset{\mathbf{z}}{\text{minimize}} \qquad \text{Var}\{\mathbf{g}_{\text{real}}\}\\
		&\text{subject to}\qquad |\mathscr{F}(\mathbf{z})|\leq \mathbf{b}_\text{up},
	\end{split}
\end{equation}
where $\mathscr{F}(\cdot)$ denotes the fast Fourier transform, and $\mathbf{b}_\text{up}$ represents the upper bound of the pre-defined spectral mask. 
From \eqref{eq_zform}, the matrix $\mathbf{Z}$ can be represented by the vector  $\mathbf{z}=[z[0],z[\frac{1}{M_{\text{OSF}}}],\cdots,z[3]]$ in the following form:
\begin{equation}
	\begin{split}
		\mathbf{Z} =& z\left[0\right]\mathbf{I}_{LM_{\text{OSF}}}+z\left[\frac{1}{M_{\text{OSF}}}\right]\mathbf{C}_1+z\left[\frac{2}{M_{\text{OSF}}}\right]\mathbf{C}_2\\&+\cdots+z\left[3\right]\mathbf{C}_{3M_{\text{OSF}}},
	\end{split}
	\label{eq_zzform}
\end{equation}
where $\mathbf{C}_i\in\mathbb{R}^{LM_{\text{OSF}}\times LM_{\text{OSF}}}$, $i\in [1,3M_{\text{OSF}}]$, is a constant Toeplitz matrix only containing zeros and ones. The first row and the first column of $\mathbf{C}_i$ share the same form as
\begin{equation}
	[\quad\underbrace{0\cdots0}_{i}\quad1\hspace{0.5em}\underbrace{0\cdots0}_{LM_{\text{OSF}}-i-1}].
\end{equation}
Inserting \eqref{eq_zzform} into \eqref{equ_bcrbfin}, we have
\begin{equation}\label{equ_newopt}
	\begin{split}
		&\text{Var}\{\mathbf{g}_\text{real}\}=N_\text{R}\text{Tr}\{\mathbf{A}^{-1}(\mathbf{z})\},
	\end{split}
\end{equation}
where $\mathbf{A}(\mathbf{z})=\frac{1}{\sigma_\text{N}^2}(z[0]\mathbf{X}_{\text{real}}^H(\bm{\tau}_{\text{real}})\mathbf{X}_{\text{real}}(\bm{\tau}_{\text{real}})+z\left[\frac{1}{M_{\text{OSF}}}\right]\mathbf{X}_{\text{real}}^H(\bm{\tau}_{\text{real}})\mathbf{C}_1\mathbf{X}_{\text{real}}(\bm{\tau}_{\text{real}})+\cdots+z[3]\mathbf{X}_{\text{real}}^H(\bm{\tau}_{\text{real}})\mathbf{C}_{3M_{\text{OSF}}}\mathbf{X}_{\text{real}}(\bm{\tau}_{\text{real}}))+2\mathbf{\Gamma}_{\text{real}}^{-1}$ 
is symmetric positive definite. The optimization problem \eqref{equ_opt} is then converted to
\begin{equation}\label{equ_convexproblem}
\begin{split}
	&\underset{\mathbf{z}}{\text{minimize}} \qquad N_\text{R}\text{Tr}\left\{\mathbf{A}^{-1}(\mathbf{z})\right\}\\
	&\text{subject to}\qquad |\mathscr{F}(\mathbf{z})|\leq \mathbf{b}_\text{up}.
\end{split}
\end{equation}
\begin{prop}
	The problem in \eqref{equ_convexproblem} is convex.
\end{prop}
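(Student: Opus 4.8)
The plan is to verify separately that the objective in \eqref{equ_convexproblem} is a convex function of $\mathbf{z}$ and that the feasible set is convex, and then to invoke the definition of a convex program.

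First I would establish convexity of the objective $N_\text{R}\,\text{Tr}\{\mathbf{A}^{-1}(\mathbf{z})\}$. The core ingredient is the classical fact \cite{boyd_vandenberghe_2004} that $\mathbf{S}\mapsto\text{Tr}\{\mathbf{S}^{-1}\}$ is convex on the cone of symmetric positive definite matrices. If a self-contained argument is wanted, I would restrict to an arbitrary line $\mathbf{S}(t)=\mathbf{S}_0+t\mathbf{V}$ with $\mathbf{S}_0\succ 0$ and $\mathbf{V}=\mathbf{V}^T$, factor $\mathbf{S}_0=\mathbf{S}_0^{1/2}\mathbf{S}_0^{1/2}$, diagonalize $\mathbf{S}_0^{-1/2}\mathbf{V}\mathbf{S}_0^{-1/2}=\mathbf{Q}\bm{\Lambda}\mathbf{Q}^T$, and reduce to $\text{Tr}\{\mathbf{S}(t)^{-1}\}=\sum_j n_j/(1+t\lambda_j)$, where $n_j=[\mathbf{Q}^T\mathbf{S}_0^{-1}\mathbf{Q}]_{jj}>0$; each summand has second derivative $2n_j\lambda_j^2/(1+t\lambda_j)^3\ge 0$ on the interval where $1+t\lambda_j>0$ (which is exactly the interval on which $\mathbf{S}(t)\succ 0$), so the restriction is convex. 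By \eqref{eq_zzform}, $\mathbf{A}(\mathbf{z})$ is an affine function of the entries of $\mathbf{z}$, namely $\mathbf{A}(\mathbf{z})=\frac{1}{\sigma_\text{N}^2}\big(z[0]\mathbf{X}_{\text{real}}^H\mathbf{X}_{\text{real}}+\sum_{i=1}^{3M_{\text{OSF}}}z[i/M_{\text{OSF}}]\mathbf{X}_{\text{real}}^H\mathbf{C}_i\mathbf{X}_{\text{real}}\big)+2\mathbf{\Gamma}_{\text{real}}^{-1}$; since convexity is preserved under precomposition with an affine map, $\text{Tr}\{\mathbf{A}^{-1}(\mathbf{z})\}$ is convex on the set $\mathcal{D}=\{\mathbf{z}:\mathbf{A}(\mathbf{z})\succ 0\}$, which is itself convex as the preimage of the open positive definite cone under an affine map, and multiplying by $N_\text{R}>0$ preserves convexity.

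Next I would check that the feasible region is convex. The discrete Fourier transform $\mathscr{F}(\cdot)$ is linear, so every coordinate $[\mathscr{F}(\mathbf{z})]_i$ is a linear functional of $\mathbf{z}$, whence $\mathbf{z}\mapsto|[\mathscr{F}(\mathbf{z})]_i|$ is the absolute value (equivalently, the Euclidean norm of the real and imaginary parts) of a linear map and is therefore convex; the constraint $|[\mathscr{F}(\mathbf{z})]_i|\le[\mathbf{b}_\text{up}]_i$ thus cuts out a convex sublevel set. The feasible set is the intersection of these sets over all frequency bins with the domain $\mathcal{D}$ on which the objective is finite, hence convex as an intersection of convex sets. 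A convex objective minimized over a convex feasible set is a convex optimization problem, which is the claim.

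The step I expect to carry essentially all the weight is the matrix-analytic fact that $\text{Tr}\{\mathbf{S}^{-1}\}$ is convex on the positive definite cone, together with the bookkeeping point that the effective domain of the objective must be identified as $\mathcal{D}$ so that \eqref{equ_convexproblem} is a genuine convex program rather than merely a convex function on an unspecified set. The remaining pieces — affine precomposition, convexity of the norm-type constraints, and closure of convexity under intersection — are routine and essentially free.
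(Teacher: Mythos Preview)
Your proposal is correct and follows essentially the same approach as the paper: invoke the convexity of $\mathbf{S}\mapsto\text{Tr}\{\mathbf{S}^{-1}\}$ on the positive definite cone (citing \cite{boyd_vandenberghe_2004}), note that $\mathbf{A}(\mathbf{z})$ is affine in $\mathbf{z}$ so precomposition preserves convexity, and observe that the constraint is convex because $\mathscr{F}$ is linear and $|\cdot|$ is convex. Your version is more detailed than the paper's --- you include a self-contained line-restriction argument for the trace-inverse convexity and explicitly identify the effective domain $\mathcal{D}=\{\mathbf{z}:\mathbf{A}(\mathbf{z})\succ 0\}$ --- but the underlying strategy is the same.
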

\begin{proof}
	We start with the objective function in \eqref{equ_convexproblem}. By definition, $\mathbf{A}(\mathbf{z})$ is a positive definite matrix. Thus, from \cite[pp.~116]{boyd_vandenberghe_2004}, $\text{Tr}\{\mathbf{A}^{-1}(\mathbf{z})\}$ is convex with respect to $\mathbf{A}(\mathbf{z})$. We further note that 
	$\mathbf{A}(\mathbf{z})$ is a linear transform of $\mathbf{z}$. Since composition with a linear transformation preserves convexity, we obtain that the objective function in \eqref{equ_convexproblem} is convex. Furthermore, the constraint function is also convex due to the linear fast Fourier transformation, denoted by $\mathscr{F}(\cdot)$, and the convex absolute value, denoted by $|\cdot|$. Therefore, \eqref{equ_convexproblem} is a convex optimization problem.
\end{proof}
The optimization problem \eqref{equ_convexproblem} can be solved by the CVX toolbox \cite{cvx}.
After obtaining $z(t)$, the pulse shaping filter $q(t)$ can be obtained via spectral factorization by the Fej$\acute{e}$r-Riesz theorem. The exact factorization steps can be found in Section 4 of \cite{574313}, and the corresponding MATLAB code is available in \cite{cvx}. In practice, $\mathbf{A}(\mathbf{z})$ can be obtained with averaged $\bm{\tau}_{\text{real}}$ and $\mathbf{\Gamma}_{\text{real}}$ in any given coherence interval.

\section{Numerical Results}\label{numer}
In this section, the setup is as follows. There are 1000 single-antenna users located in the cell, and only 300 of them become active. In the cell, the BS is equipped with $N_\text{R}=32$ antennas, and there are $N_{\text{P}}=64$ available orthogonal preambles in total. Each preamble has length $N_{\text{PL}}=139$ \cite{3gpp}. To align with existing communication systems, we adopt the ZC sequences \cite{4119357} as the preamble sequences due to their good auto-correlation properties and low peak-to-average power ratios. A ZC sequence is generated by two key parameters, namely the root index $v\in[1,,N_{\text{PL}}-1]$ and the length of the sequence $N_{\text{PL}}$. Given these two parameters, $x_{k,n}$ in \eqref{equ_x_S} is calculated as $\exp(-j\pi vn(n+1)/N_{\text{PL}})$ \cite{3gpp}.
Moreover, the RRC filter with $\beta=0.4$ is used for the pulse shaping filter if no other filter is specified. 
Each $\gamma_k$ in dB is randomly generated from a uniform distribution over $[-128.1,-118.1]$. The signal-to-noise ratio (SNR) is defined as $10\lg{\frac{\text{Tr}\{(\mathbf{Z}\mathbf{X}(\bm{\tau})\mathbf{G})^H(\mathbf{Z}\mathbf{X}(\bm{\tau})\mathbf{G})\}}{LM_{\text{OSF}}N_\text{R}\sigma_\text{N}^2}}$. To evaluate the performance of the proposed UAD-DC algorithm, we use the probability of misdetection and the NMSE of $\mathbf{G}$ as the performance metrics.
Note that the probability of misdetection is defined as the average of non-detected active users over the number of active users, and the probability of false alarm is defined as the average of inactive users detected as active over the number of detected users.  

To evaluate the effectiveness of Algorithm \ref{alg:turbocsmmv}, we compare the estimation performance of Algorithm \ref{alg:turbocsmmv} and the BCRB derived in \eqref{equ_bcrbfin} under fixed time delay $\bm{\tau}(u)$ in Fig. \ref{fig_bcrb}. Simulation results show that Algorithm \ref{alg:turbocsmmv} can achieve a performance close to the BCRB. In addition, we compare the sparse recovery performance of Algorithm \ref{alg:turbocsmmv} with that of BG-GAMP \cite{6556987} and VAMP \cite{8713501}. From Fig. \ref{fig_bcrb}, Algorithm \ref{alg:turbocsmmv} outperforms BG-GAMP and VAMP, since it takes the correlation of noise samples into consideration, as shown in \eqref{equ_postG}.
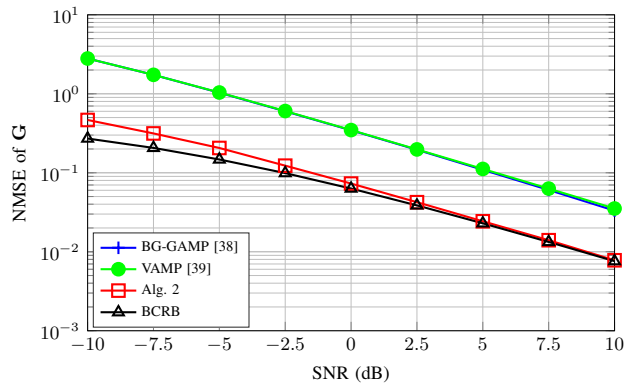
\begin{figure}[!htbp]
	\centering
	\resizebox{0.95\columnwidth}{!}{\begin{tikzpicture}
	
	\begin{axis}[%
		width=\columnwidth,
		height=.6\columnwidth,
		at={(0.758in,0.603in)},
		scale only axis,
		xmin=-10,
		xmax=10,
		ylabel style={font=\small},
		xlabel style={font=\small},
		xlabel={SNR (dB)},
		xtick=data,
		xmajorgrids,
		ymode=log,
		ymin=0.001,
		ymax=10,
		yminorticks=true,
		ylabel={NMSE of $\mathbf{G}$},
		axis background/.style={fill=white},
		ymajorgrids,
		yminorgrids,
		legend style={at={(0.01,0.01)},anchor=south west,legend cell align=left,align=left,draw=white!15!black,font=\scriptsize}
		]
		
		\addplot [color=blue, solid,line width=1pt,mark=+,mark options={solid},mark size=3pt]
		table[row sep=crcr]{
			-10	  2.79683145844058\\
			-7.5  1.73729727967008\\
			-5    1.03228670827045\\
			-2.5  0.599746086087984\\
			0     0.344992190177602\\
			2.5   0.195529834771059\\
			5     0.109013510826974\\
			7.5   0.0605702431793679\\
			10    0.0331313863312968\\
		};
		\addlegendentry{BG-GAMP \cite{6556987}}
		
		\addplot [color=green, solid,line width=1pt,mark=*,mark options={solid},mark size=3pt]
		table[row sep=crcr]{
			-10	  2.79683187753218\\
			-7.5  1.73729857967737\\
			-5    1.03785848036730\\
			-2.5  0.604822737811919\\
			0     0.347200081839582\\
			2.5   0.197553849356784\\
			5     0.111833469783432\\
			7.5   0.0630453090723292\\
			10    0.0352771855090689\\
		};
		\addlegendentry{VAMP \cite{8713501}}
		
		
		\addplot [color=red, solid,line width=1pt,mark=square,mark options={solid},mark size=3pt]
		table[row sep=crcr]{%
			-10	  0.4660\\
			-7.5  0.3147\\
			-5    0.2060\\
			-2.5  0.1230\\
			0     0.0730\\
			2.5   0.0424\\
			5     0.0243\\
			7.5   0.0140\\
			10    0.0078\\
		};
		\addlegendentry{Alg. 2}
		
		\addplot [color=black, solid,line width=1pt,mark=triangle,mark options={solid},mark size=3pt]
		table[row sep=crcr]{%
			-10	  0.2712\\
			-7.5  0.2064\\
			-5    0.1473\\
			-2.5  0.0990\\
			0     0.0631\\
			2.5   0.0386\\
			5     0.0229\\
			7.5   0.0133\\
			10    0.0076\\
		};
		\addlegendentry{BCRB}
	\end{axis}
\end{tikzpicture}}
	\caption{The NMSE of $\mathbf{G}$ by using different message passing-based sparse recovery algorithms.}\label{fig_bcrb}
\end{figure}
Moreover, the window width $L$ must be chosen larger than the preamble length $N_{\text{PL}}$, i.e., $L>N_{\text{PL}}$, to ensure the presence of Type \uppercase\expandafter{\romannumeral1} preambles. To detect more Type \uppercase\expandafter{\romannumeral1} preambles within one window, $L$ should be large enough. However, from the complexity analysis in Section \ref{subsec_overall}, $L$ can not be very large as the computational complexity is proportional to $L^2$. In our simulations, $L$ is chosen to be $187$ for balancing the detection efficiency and the computational cost.
	

Fig. \ref{fig_em} shows the probability of misdetection against the number of EM iterations $u$ in UAD-DC where SNR = 10 dB. The decreasing curves show that the probability of misdetection improves gradually with an increasing number of EM iterations, which confirms the effectiveness of the proposed algorithm. From Fig. \ref{fig_em}, it can be seen that the performance converges only after two EM iterations. This indicates that most collided users can be successfully detected in the first EM iteration, and only a few collided users require to be detected in the remaining EM iterations, thereby demonstrating the fast convergence of the proposed algorithm.
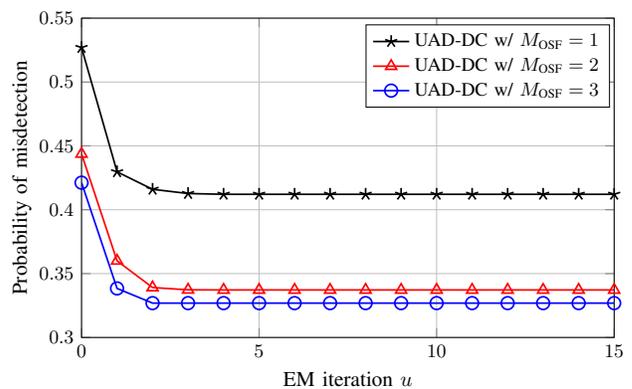
\begin{figure}[!htbp]
	\centering
	\resizebox{0.95\columnwidth}{!}{\begin{tikzpicture}
	
	\begin{axis}[%
		width=\columnwidth,
		height=.6\columnwidth,
		at={(0.758in,0.603in)},
		scale only axis,
		xmin=0,
		xmax=15,
		xlabel={EM iteration $u$},
		ylabel style={font=\normalsize},
		xlabel style={font=\normalsize},
		xtick={0,5,10,15},
		xmajorgrids,
		ymin=0.3,
		ymax=0.55,
		yminorticks=true,
		ylabel={Probability of misdetection},
		ylabel style={at={(axis description cs:-0.08,0.5)}},
		ymajorgrids,
		yminorgrids,
		axis background/.style={fill=white},
		legend style={at={(axis cs:8,0.48)},anchor=south west}
		]
		
		\addplot [color=black,solid,line width=0.8pt,mark=star,mark options={solid},mark size=3pt]
		table[row sep=crcr]{%
			0  0.5270\\
			1  0.4297\\
			2  0.4160\\
			3  0.4127\\
			4  0.4121\\
			5  0.4121\\
			6  0.4121\\
			7  0.4121\\
			8  0.4121\\
			9  0.4121\\
			10 0.4121\\
			11 0.4121\\
			12 0.4121\\
			13 0.4121\\
			14 0.4121\\
			15 0.4121\\
		};	
		\addlegendentry{UAD-DC w/ $M_{\text{OSF}}=1$}
		
		\addplot [color=red,solid,line width=0.8pt,mark=triangle,mark options={solid},mark size=3pt]
		table[row sep=crcr]{%
			0  0.4437\\
			1  0.36\\  
			2  0.3391\\
			3  0.3373\\ 
			4  0.3372\\
			5  0.3372\\
			6  0.3372\\
			7  0.3372\\
			8  0.3372\\
			9  0.3372\\
			10 0.3372\\
			11 0.3372\\
			12 0.3372\\
			13 0.3372\\
			14 0.3372\\
			15 0.3372\\
		};	
		\addlegendentry{UAD-DC w/ $M_{\text{OSF}}=2$}
		
		\addplot [color=blue,solid,line width=0.8pt,mark=o,mark options={solid},mark size=3pt]
		table[row sep=crcr]{%
			0  0.4212\\
			1  0.3384\\
			2  0.3269\\
			3  0.3269\\
			4  0.3269\\
			5  0.3269\\
			6  0.3269\\
			7  0.3269\\
			8  0.3269\\
			9  0.3269\\
			10  0.3269\\
			11  0.3269\\
			12  0.3269\\
			13  0.3269\\
			14  0.3269\\
			15  0.3269\\
		};	
		\addlegendentry{UAD-DC w/ $M_{\text{OSF}}=3$}
	\end{axis}
	
\end{tikzpicture}}
	\caption{Detection performance against the number of EM iterations $u$, where the probability of false alarm is lower than $10^{-3}$ and SNR = 10 dB.}\label{fig_em}
\end{figure}

Fig. \ref{fig_peralg} compares the detection performance of UAD-DC and conventional cross-correlation-based RA. Compared to UAD-DC, conventional RA requires only $M^2_{\text{OSF}}N_\text{R}N_{\text{P}}N_\text{PL}L$ multiplications. Table \ref{tab_complexity} shows numerical complexity comparisons based on the described parameter settings. From Fig. \ref{fig_peralg}, it can be seen that oversampling can improve the detection performance for both algorithms, since accurate estimation of time delays further helps to detect more active users. Unlike the conventional RA algorithm, UAD-DC estimates the time delays using the EM framework iteratively, resulting in more accurate detection of active users. For example, when SNR = 0 dB and $M_{\text{OSF}}=1$ (or $M_{\text{OSF}}=2$), UAD-DC detects 38.90\% (or 24.51\%) more active users than conventional RA from Fig. \ref{fig_pmd}. Moreover, when SNR = 10 dB and $M_{\text{OSF}}=1$ (or $M_{\text{OSF}}=2$), UAD-DC detects 22.90\% (or 18.80\%) more active users than conventional RA. Fig. \ref{fig_pfa} shows the probability of misdetection against different probabilities of false alarms, where UAD-DC outperforms conventional RA at low probability of false alarm.
\begin{table}[!htbp]
	\centering
	\caption{Complexity comparisons}
	\resizebox{\columnwidth}{!}{\begin{tabular}{ccc}
		\hline
		Oversampling Factor          &
		UAD-DC &
		Conventional RA   \\\hline
		 $M_{\text{OSF}}=1$   &
		 $1.94\times10^8$ flops   &
		 $5.32\times10^7$ flops   \\
		  $M_{\text{OSF}}=2$   &
		  $7.85\times10^8$ flops   &
		  $2.13\times10^8$ flops    \\
		   $M_{\text{OSF}}=3$   &
		   $1.95\times10^9$ flops &
		   $4.79\times10^8$ flops 
		                              \\ \hline
	\end{tabular}}\label{tab_complexity}
\end{table}
\begin{figure}[!htbp]
	\centering
	\begin{minipage}{\columnwidth}
	\resizebox{\columnwidth}{!}{\begin{tikzpicture} 
	\begin{axis}[%
		hide axis,
		xmin=10,
		xmax=40,
		ymin=0,
		ymax=0.4,
		legend columns=3,
		legend style={draw=white!15!black,legend cell align=left,font=\footnotesize}
		]
		
		\addlegendimage{color=black,dashed,line width=0.8pt,mark=star,mark options={solid},mark size=3pt}
		\addlegendentry{Conv. RA w/ $M_{\text{OSF}}=1$};
		
		\addlegendimage{color=red,dashed,line width=0.8pt,mark=triangle,mark options={solid},mark size=3pt}
		\addlegendentry{Conv. RA w/ $M_{\text{OSF}}=2$};
		
		\addlegendimage{color=blue,dashed,line width=0.8pt,mark=o,mark options={solid},mark size=3pt}
		\addlegendentry{Conv. RA w/ $M_{\text{OSF}}=3$};
		
		\addlegendimage{color=black,solid,line width=0.8pt,mark=star,mark options={solid},mark size=3pt}
		\addlegendentry{UAD-DC w/ $M_{\text{OSF}}=1$};
		
		\addlegendimage{color=red,solid,line width=0.8pt,mark=triangle,mark options={solid},mark size=3pt}
		\addlegendentry{UAD-DC w/ $M_{\text{OSF}}=2$};
		
		\addlegendimage{color=blue,solid,line width=0.8pt,mark=o,mark options={solid},mark size=3pt}
		\addlegendentry{UAD-DC w/ $M_{\text{OSF}}=3$};
		
	\end{axis}
\end{tikzpicture}}
	\end{minipage}
	\begin{minipage}{0.82\columnwidth}
	\subcaptionbox{The probability of misdetection against different SNRs, where the probability of false alarm is lower than $10^{-3}$.\label{fig_pmd}}{\hspace{-0.6cm}
\begin{tikzpicture}
	
	\begin{axis}[%
		width=\columnwidth,
		height=.6\columnwidth,
		at={(0in,0in)},
		scale only axis,
		xmin=-10,
		xmax=10,
		ylabel style={font=\footnotesize},
		xlabel style={font=\footnotesize},
		xlabel={SNR (dB)},
		xtick=data,
		xmajorgrids,
		ymin=0.2,
		ymax=1,
		yminorticks=true,
		ylabel={Probability of misdetection},
		ymajorgrids,
		yminorgrids,
		axis background/.style={fill=white},
		]
		
		\addplot [color=black,dashed,line width=0.8pt,mark=star,mark options={solid},mark size=3pt]
		table[row sep=crcr]{%
			-10   0.9951\\
			-7.5  0.9738\\
			-5  0.9049\\
			-2.5  0.8022\\
			 0    0.7036\\
			 2.5  0.6272\\
			 5  0.5762\\
			 7.5  0.5447\\
			 10  0.5270\\	
		};	
		
		\addplot [color=red,dashed,line width=0.8pt,mark=triangle,mark options={solid},mark size=3pt]
		table[row sep=crcr]{%
			-10   0.9929\\
			-7.5  0.9596\\
			-5    0.8617\\
			-2.5  0.7294\\
			0     0.6137\\
			2.5   0.5319\\
			5     0.4916\\
			7.5   0.4614\\
			10 	  0.4437\\			
		};	
		
		\addplot [color=blue,dashed,line width=0.8pt,mark=o,mark options={solid},mark size=3pt]
		table[row sep=crcr]{%
			-10   0.9932\\
			-7.5  0.9559\\
			-5    0.8469\\
			-2.5  0.7140\\
			0     0.5979\\
			2.5   0.5136\\
			5     0.4717\\
			7.5   0.4418\\
			10 	  0.4212\\
		};	
		
		\addplot [color=black,solid,line width=0.8pt,mark=star,mark options={solid},mark size=3pt]
		table[row sep=crcr]{%
			-10   0.9778\\
			-7.5  0.9304\\
			-5    0.8273\\
			-2.5  0.6972\\
		     0    0.5883\\
			 2.5  0.5102\\
			 5    0.4676\\
			 7.5  0.4357\\
			 10   0.4187\\
		};	
		%
		\addplot [color=red,solid,line width=0.8pt,mark=triangle,mark options={solid},mark size=3pt]
		table[row sep=crcr]{%
			-10   0.9728\\
			-7.5  0.9113\\
			-5    0.7912\\
			-2.5  0.6418\\
			0     0.5190\\
			2.5   0.4406\\
			5     0.393\\
			7.5   0.362\\
			10 	  0.3391\\		
		};	
		
		\addplot [color=blue,solid,line width=0.8pt,mark=o,mark options={solid},mark size=3pt]
		table[row sep=crcr]{%
			-10   0.9731\\
			-7.5  0.9083\\
			-5  0.7740\\
			-2.5  0.6156\\
			0  0.5001\\
			2.5  0.4196\\
			5  0.3711\\
			7.5  0.3470\\
			10  0.3269\\
		};	
	\end{axis}
	
\end{tikzpicture}
	\subcaptionbox{The probability of misdetection against different probabilities of false alarms, where SNR = 10 dB.\label{fig_pfa}}{\hspace{-0.6cm}
\begin{tikzpicture}
	
	\begin{axis}[%
		width=\columnwidth,
		height=.6\columnwidth,
		at={(0in,0in)},
		scale only axis,
		xmin=0,
		xmax=0.7,
		ylabel style={font=\footnotesize},
		xlabel style={font=\footnotesize},
		xlabel={Probability of false alarm},
		xtick={0,0.1,0.2,0.3,0.4,0.5,0.6,0.7,0.8,0.9,1},
		xticklabels={0,0.1,0.2,0.3,0.4,0.5,0.6,0.7,0.8,0.9,1},
		xmajorgrids,
		ymin=0,
		ymax=0.6,
		yminorticks=true,
		ylabel={Probability of misdetection},
		ytick={0,0.1,0.2,0.3,0.4,0.5,0.6},
		yticklabels={0,0.1,0.2,0.3,0.4,0.5,0.6},
		ymajorgrids,
		yminorgrids,
		axis background/.style={fill=white},
		]
		
		\addplot [color=black,dashed,line width=0.8pt,mark=star,mark options={solid},mark size=3pt]
		table[row sep=crcr]{%
			0         0.5270\\
			0.0288    0.4548\\
			0.0566    0.4250\\
			0.0825    0.403\\
			0.1457    0.358\\
			0.1862    0.3344\\
			0.2279    0.3176\\
			0.2722    0.3043\\
			0.3250    0.285\\
			0.3711    0.27\\
			0.4283    0.2517\\
			0.4777    0.2333\\
			0.5274    0.2167\\
			0.5796    0.1999\\ 
			0.6403    0.1844\\ 
			0.6973    0.1680\\
			0.7599    0.1590\\
		};	
		
		\addplot [color=red,dashed,line width=0.8pt,mark=triangle,mark options={solid},mark size=3pt]
		table[row sep=crcr]{%
			0         0.4437\\
			0.0116    0.4049\\
			0.0497    0.3390\\
			0.0948    0.3116\\
			0.140     0.2719\\
			0.1858    0.2538\\
			0.2261    0.2387\\
			0.2646    0.2279\\
			0.3169    0.2130\\
			0.3617    0.2030\\
			0.4249    0.1868\\
			0.4704    0.1750\\
			0.5274    0.1588\\
			0.5785    0.1439\\
			0.6344    0.1277\\
			0.6921    0.1106\\
			0.7576    0.1104\\		
		};	
		
		\addplot [color=blue,dashed,line width=0.8pt,mark=o,mark options={solid},mark size=3pt]
		table[row sep=crcr]{%
			0         0.4212\\
			0.0134    0.3821\\  
			0.0506    0.3241\\  
			0.0847    0.3050\\
			0.1407    0.2634\\
			0.1827    0.2462\\
			0.2309    0.2321\\
			0.2673    0.2224\\
			0.3177    0.2078\\
			0.3707    0.1950\\
			0.4255    0.1818\\
			0.4714    0.1679\\
			0.5242    0.1541\\
			0.5771    0.1393\\
			0.6375    0.1207\\
			0.6942    0.1077\\
			0.7571    0.1044\\
		};	
		
		\addplot [color=black,solid,line width=0.8pt,mark=star,mark options={solid},mark size=3pt]
		table[row sep=crcr]{%
			0         0.4187\\
			0.0306    0.3849\\
			0.0824    0.332\\ 
			0.1345    0.3052\\ 
			0.1896    0.2900\\
			0.2408    0.2753\\
			0.2893    0.265\\
			0.3412    0.2549\\
			0.3941    0.241\\
			0.4437    0.232\\
			0.4778    0.223\\
			0.5449    0.205\\
			0.5929    0.19\\
			0.6291    0.1810\\
			0.6956    0.16\\  
			0.7178    0.1515\\  
			0.7789    0.12\\ 
			0.8261    0.0947\\
		};	
		
		\addplot [color=red,solid,line width=0.8pt,mark=triangle,mark options={solid},mark size=3pt]
		table[row sep=crcr]{%
			0         0.3391\\
			0.0342    0.2960\\
			0.0838    0.2476\\
			0.1352    0.2272\\
			0.1894    0.2148\\
			0.2408    0.2068\\
			0.2860    0.1977\\
			0.3419    0.1858\\
			0.3937    0.1762\\
			0.4441    0.1673\\
			0.4790    0.160\\
			0.5450    0.1436\\
			0.5906    0.1352\\
			0.6239    0.1307\\
			0.6950    0.1121\\
			0.7122    0.1054\\
			0.7754    0.0848\\	
		};	
		
		\addplot [color=blue,solid,line width=0.8pt,mark=o,mark options={solid},mark size=3pt]
		table[row sep=crcr]{%
			0         0.3269\\
			0.0356    0.276\\
			0.0842    0.2327\\
			0.1351    0.2169\\
			0.1897    0.2064\\
			0.2420    0.1982\\
			0.2860    0.1880\\
			0.3416    0.1758\\
			0.3930    0.167\\
			0.4440    0.1565\\
			0.4756    0.15\\
			0.5434    0.1368\\
			0.5886    0.1301\\
			0.6209    0.1226\\
			0.6950    0.1051\\ 
			0.7743    0.0771\\
		};	
	\end{axis}
\end{tikzpicture}
	\end{minipage}
	\caption{Performance comparisons between conventional RA and UAD-DC.}\label{fig_peralg}
\end{figure}

Fig. \ref{fig_cvxfilter} compares the impulse responses, frequency responses and BCRBs of various pulse shaping filters under the same spectral constraint, represented by the dashed lines. Under this spectral constraint, the optimized filter outperforms the RRC filter with $\beta=0.4$ and the Gaussian filter with $\sigma_\text{G}^2=0.49$ in terms of the BCRBs shown in Fig. \ref{fig_cvxcrb}. 
\begin{figure}[!htbp]
	\centering
	\begin{minipage}{\columnwidth}
	\resizebox{\columnwidth}{!}{\begin{tikzpicture} 
	\begin{axis}[%
		hide axis,
		xmin=10,
		xmax=40,
		ymin=0,
		ymax=0.4,
		legend columns=4,
		legend style={draw=white!15!black,legend cell align=left}
		]
		
		\addlegendimage{blue,line width=1.0pt}
		\addlegendentry{Gaussian w/ $\sigma_\text{G}^2=0.49$};
		
		\addlegendimage{black,line width=1.0pt}
		\addlegendentry{RRC w/ $\beta=0.4$};
		
		\addlegendimage{red,line width=1.0pt}
		\addlegendentry{Optimized filter};
		
		
	\end{axis}
\end{tikzpicture}}
	\end{minipage}\vfill
	\begin{minipage}{0.82\columnwidth}
		\centering
		\subcaptionbox{Impulse responses of $z(t)$.\label{fig_cvxtime}}{\hspace{-0.7cm}\begin{tikzpicture}
	
	\begin{axis}[%
		width=\columnwidth,
		height=.5\columnwidth,
		at={(0in,0in)},
		scale only axis,
		xmin=1,
		xmax=73,
		xtick={1,13,25,37,49,61,73},
		xticklabels={$-3T_s$,$-2T_s$,$-T_s$,$0$,$T_s$,$2T_s$,$3T_s$},
		ymin=-0.4,
		ymax=1,
		ylabel style={font=\footnotesize},
		xlabel style={font=\footnotesize},
		ylabel={$z(t)$},
		xlabel={$t$},
		axis background/.style={fill=white},
		xmajorgrids,
		ymajorgrids,
		]
		
		\addplot [color=black,solid,line width=1pt]
		table[row sep=crcr]{%
			1	6.62539340231493e-18\\
			2	0.00550392756109055\\
			3	0.0124018210454291\\
			4	0.0202711103032212\\
			5	0.028474241198657\\
			6	0.0361853277298916\\
			7	0.0424413181578388\\
			8	0.0462161015161893\\
			9	0.0465135061951994\\
			10	0.0424728025400825\\
			11	0.033478359489341\\
			12	0.0192637464638169\\
			13	-2.02159439711661e-17\\
			14	-0.0236418882698099\\
			15	-0.0504629164014464\\
			16	-0.0787489067017016\\
			17	-0.106327715545631\\
			18	-0.130670772985305\\
			19	-0.149035097436889\\
			20	-0.158638167495007\\
			21	-0.15685465516374\\
			22	-0.14142135623731\\
			23	-0.110634971383463\\
			24	-0.0635268985295551\\
			25	3.3461148423395e-17\\
			26	0.0790855970274923\\
			27	0.171887338539247\\
			28	0.275621173455956\\
			29	0.38666917812935\\
			30	0.500749379580277\\
			31	0.613138350952957\\
			32	0.7189328054709\\
			33	0.813333027401414\\
			34	0.891928853341733\\
			35	0.950968256723325\\
			36	0.987589474516573\\
			37	1\\
			38	0.987589474516573\\
			39	0.950968256723325\\
			40	0.891928853341733\\
			41	0.813333027401414\\
			42	0.7189328054709\\
			43	0.613138350952957\\
			44	0.500749379580277\\
			45	0.38666917812935\\
			46	0.275621173455956\\
			47	0.171887338539247\\
			48	0.0790855970274923\\
			49	3.3461148423395e-17\\
			50	-0.0635268985295551\\
			51	-0.110634971383463\\
			52	-0.14142135623731\\
			53	-0.15685465516374\\
			54	-0.158638167495007\\
			55	-0.149035097436889\\
			56	-0.130670772985305\\
			57	-0.106327715545631\\
			58	-0.0787489067017016\\
			59	-0.0504629164014464\\
			60	-0.0236418882698099\\
			61	-2.02159439711661e-17\\
			62	0.0192637464638169\\
			63	0.033478359489341\\
			64	0.0424728025400825\\
			65	0.0465135061951994\\
			66	0.0462161015161893\\
			67	0.0424413181578388\\
			68	0.0361853277298916\\
			69	0.028474241198657\\
			70	0.0202711103032212\\
			71	0.0124018210454291\\
			72	0.00550392756109055\\
			73	6.62539340231493e-18\\
		};
		
		\addplot [color=blue,solid,line width=1pt]
		table[row sep=crcr]{%
			1	8.51493685255308e-05\\
			2	0.000142279889876355\\
			3	0.000234328102543632\\
			4	0.00038038577549304\\
			5	0.000608615991391358\\
			6	0.000959802208478728\\
			7	0.0014918991985748\\
			8	0.00228568615860116\\
			9	0.00345154162992133\\
			10	0.00513723039231824\\
			11	0.00753640771941914\\
			12	0.010897306004318\\
			13	0.0155307821057586\\
			14	0.0218165928844485\\
			15	0.0302064681556683\\
			16	0.0412223163384991\\
			17	0.0554477918629579\\
			18	0.0735115433516366\\
			19	0.0960608106038349\\
			20	0.123724690111655\\
			21	0.157067352582185\\
			22	0.196532733132461\\
			23	0.242383629994882\\
			24	0.294639587633964\\
			25	0.353019204566113\\
			26	0.416893368942807\\
			27	0.485256167541429\\
			28	0.556719664513747\\
			29	0.629537320023702\\
			30	0.701658549490564\\
			31	0.770813980532833\\
			32	0.834627645813719\\
			33	0.890749054895674\\
			34	0.93699526395958\\
			35	0.971491136351545\\
			36	0.992795295749356\\
			37	1\\
			38	0.992795295749356\\
			39	0.971491136351545\\
			40	0.93699526395958\\
			41	0.890749054895674\\
			42	0.834627645813719\\
			43	0.770813980532833\\
			44	0.701658549490564\\
			45	0.629537320023702\\
			46	0.556719664513747\\
			47	0.485256167541429\\
			48	0.416893368942807\\
			49	0.353019204566113\\
			50	0.294639587633964\\
			51	0.242383629994882\\
			52	0.196532733132461\\
			53	0.157067352582185\\
			54	0.123724690111655\\
			55	0.0960608106038349\\
			56	0.0735115433516366\\
			57	0.0554477918629579\\
			58	0.0412223163384991\\
			59	0.0302064681556683\\
			60	0.0218165928844485\\
			61	0.0155307821057586\\
			62	0.010897306004318\\
			63	0.00753640771941914\\
			64	0.00513723039231824\\
			65	0.00345154162992133\\
			66	0.00228568615860116\\
			67	0.0014918991985748\\
			68	0.000959802208478728\\
			69	0.000608615991391358\\
			70	0.00038038577549304\\
			71	0.000234328102543632\\
			72	0.000142279889876355\\
			73	8.51493685255308e-05\\
		};

		\addplot [color=red,solid,line width=1pt]
		table[row sep=crcr]{%
			1	0.00448437560819185\\
			2	0.0108022236319071\\
			3	0.0163721704469713\\
			4	0.0313253209240146\\
			5	0.0486321798510757\\
			6	0.0610011128410071\\
			7	0.0846267894983871\\
			8	0.108569921764068\\
			9	0.120537737998844\\
			10	0.139649423751704\\
			11	0.153176046615915\\
			12	0.147712639108088\\
			13	0.142134408230272\\
			14	0.125320072519813\\
			15	0.0897359417475521\\
			16	0.0505732669444751\\
			17	8.70963453744987e-05\\
			18	-0.0577333415717145\\
			19	-0.114157675128573\\
			20	-0.173126225193942\\
			21	-0.219767532595724\\
			22	-0.253266459831403\\
			23	-0.274977370119636\\
			24	-0.267624095416758\\
			25	-0.238114565724953\\
			26	-0.186050242843476\\
			27	-0.104020603361651\\
			28	-0.00448528502407951\\
			29	0.117156013605782\\
			30	0.249949565624331\\
			31	0.379973992403224\\
			32	0.521661119636239\\
			33	0.644377645171849\\
			34	0.737164063260546\\
			35	0.828853184565468\\
			36	0.877726115702138\\
			37	1.00000000083948\\
			38	0.877726115702138\\
			39	0.828853184565468\\
			40	0.737164063260546\\
			41	0.644377645171849\\
			42	0.521661119636239\\
			43	0.379973992403224\\
			44	0.249949565624331\\
			45	0.117156013605782\\
			46	-0.00448528502407951\\
			47	-0.104020603361651\\
			48	-0.186050242843476\\
			49	-0.238114565724953\\
			50	-0.267624095416758\\
			51	-0.274977370119636\\
			52	-0.253266459831403\\
			53	-0.219767532595724\\
			54	-0.173126225193942\\
			55	-0.114157675128573\\
			56	-0.0577333415717145\\
			57	8.70963453744987e-05\\
			58	0.0505732669444751\\
			59	0.0897359417475521\\
			60	0.125320072519813\\
			61	0.142134408230272\\
			62	0.147712639108088\\
			63	0.153176046615915\\
			64	0.139649423751704\\
			65	0.120537737998844\\
			66	0.108569921764068\\
			67	0.0846267894983871\\
			68	0.0610011128410071\\
			69	0.0486321798510757\\
			70	0.0313253209240146\\
			71	0.0163721704469713\\
			72	0.0108022236319071\\
			73	0.00448437560819185\\
		};
	\end{axis}
\end{tikzpicture}
		\subcaptionbox{Frequency responses of $z(t)$.\label{fig_cvxfre}}{\hspace{-0.7cm}\input{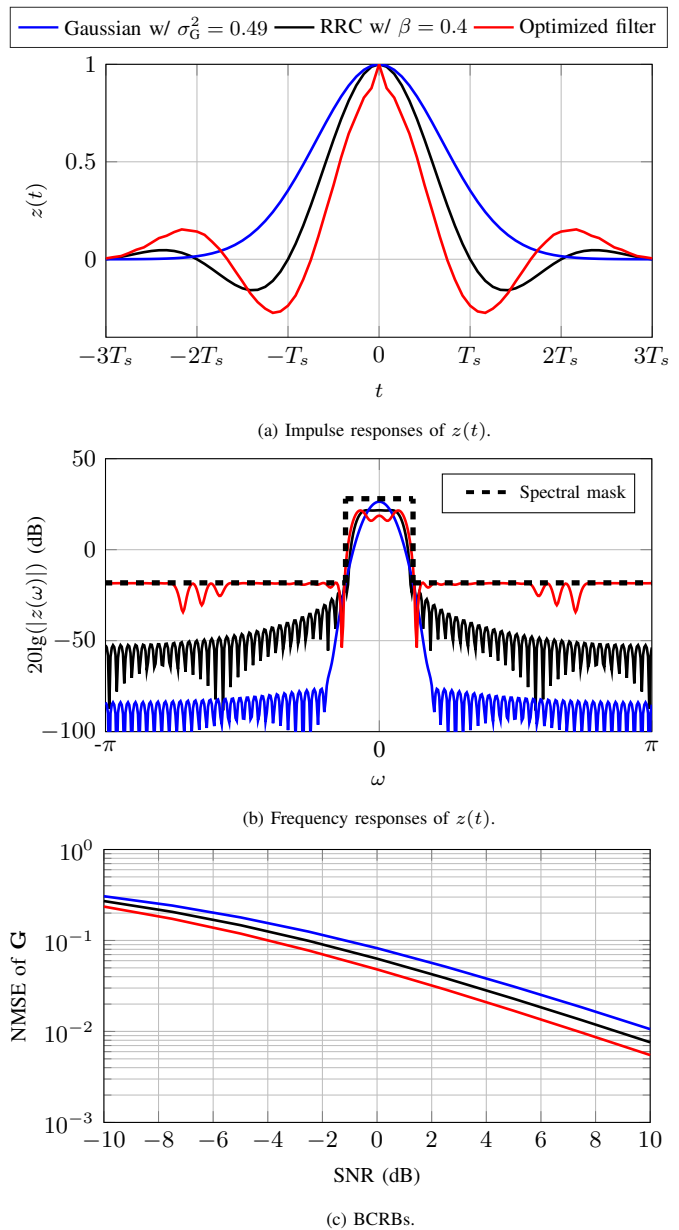}}
		\subcaptionbox{BCRBs.\label{fig_cvxcrb}}{\hspace{-0.9cm}\begin{tikzpicture}
	
	\begin{axis}[%
		width=\columnwidth,
		height=.5\columnwidth,
		at={(0in,0in)},
		scale only axis,
		xmin=-10,
		xmax=10,
		xlabel={SNR (dB)},
		xtick={-10,-8,-6,-4,-2,0,2,4,6,8,10},
		xmajorgrids,
		ymode=log,
		ymin=0.001,
		ymax=1,
		yminorticks=true,
		ylabel style={font=\footnotesize},
		xlabel style={font=\footnotesize},
		ylabel={NMSE of $\mathbf{G}$},
		axis background/.style={fill=white},
		ymajorgrids,
		yminorgrids,
		]
		
		\addplot [color=red, solid,line width=1pt]
		table[row sep=crcr]{%
			-10	  0.2358\\
			-7.5  0.1730\\
			-5    0.1191\\
			-2.5  0.0775\\
			0     0.0481\\
			2.5   0.0289\\
			5     0.0169\\
			7.5   0.0097\\
			10    0.0055\\
		};
		
		\addplot [color=black, solid,line width=1pt]
		table[row sep=crcr]{%
			-10	  0.2712\\
			-7.5  0.2064\\
			-5    0.1473\\
			-2.5  0.0990\\
			0     0.0631\\
			2.5   0.0386\\
			5     0.0229\\
			7.5   0.0133\\
			10    0.0076\\
		};
		
		\addplot [color=blue, solid,line width=1pt]
		table[row sep=crcr]{%
			-10	  0.3065\\
			-7.5  0.2421\\
			-5    0.1794\\
			-2.5  0.1250\\
			0     0.0823\\
			2.5   0.0516\\
			5     0.0312\\
			7.5   0.0184\\
			10    0.0106\\
		};
	
	\end{axis}
\end{tikzpicture}
	\end{minipage}
	\caption{Performance comparisons by using different pulse shaping filters, where $z(t)$ is the convolution of the pulse shaping and its corresponding matched filter.}\label{fig_cvxfilter}
\end{figure}
Fig. \ref{fig_cvxperformance} compares the detection performance of the above filters, where the optimized filter outperforms the RRC filter with $\beta=0.4$ and the Gaussian filter with $\sigma_\text{G}^2=0.49$. For example, when SNR = 0 dB and $M_{\text{OSF}}=2$ (or $M_{\text{OSF}}=3$), the system with optimized filter can detect 6.88\% (or 9.04\%) more active users than the system with RRC filter, and 33.26\% (or 39.81\%) more active users than the system with Gaussian filter. Moreover, when SNR = 10 dB and $M_{\text{OSF}}=2$ (or $M_{\text{OSF}}=3$), the system with optimized filter can detect 3.46\% (or 7.61\%) more active users than the system with RRC filter, and 20.60\% (or 24.86\%) more active users than the system with Gaussian filter. Note that when the system is sampled at the Nyquist rate, namely $M_{\text{OSF}}=1$, the RRC filter with $\beta=0.4$ achieves the best estimation performance, as no ISI occurs. 
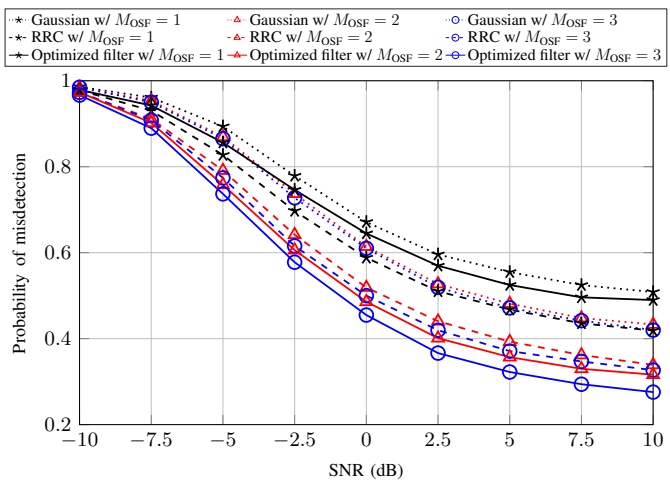
\begin{figure}[!htbp]
	\centering
	\begin{minipage}{\columnwidth}
			\resizebox{\columnwidth}{!}{\begin{tikzpicture} 
	\begin{axis}[%
		hide axis,
		xmin=10,
		xmax=40,
		ymin=0,
		ymax=0.4,
		legend columns=3,
		legend style={draw=white!15!black,legend cell align=left,font=\Large}
		]
		
		\addlegendimage{color=black,dotted,line width=0.8pt,mark=star,mark options={solid},mark size=3pt}
		\addlegendentry{Gaussian w/ $M_{\text{OSF}}=1$};
		
		\addlegendimage{color=red,dotted,line width=0.8pt,mark=triangle,mark options={solid},mark size=3pt}
		\addlegendentry{Gaussian w/ $M_{\text{OSF}}=2$};
		
		\addlegendimage{color=blue,dotted,line width=0.8pt,mark=o,mark options={solid},mark size=3pt}
		\addlegendentry{Gaussian w/ $M_{\text{OSF}}=3$};
		
		\addlegendimage{color=black,dashed,line width=0.8pt,mark=star,mark options={solid},mark size=3pt}
		\addlegendentry{RRC w/ $M_{\text{OSF}}=1$};
		
		\addlegendimage{color=red,dashed,line width=0.8pt,mark=triangle,mark options={solid},mark size=3pt}
		\addlegendentry{RRC w/ $M_{\text{OSF}}=2$};
		
		\addlegendimage{color=blue,dashed,line width=0.8pt,mark=o,mark options={solid},mark size=3pt}
		\addlegendentry{RRC w/ $M_{\text{OSF}}=3$};
		
		\addlegendimage{color=black,solid,line width=0.8pt,mark=star,mark options={solid},mark size=3pt}
		\addlegendentry{Optimized filter w/ $M_{\text{OSF}}=1$};
		
		\addlegendimage{color=red,solid,line width=0.8pt,mark=triangle,mark options={solid},mark size=3pt}
		\addlegendentry{Optimized filter w/ $M_{\text{OSF}}=2$};
		
		\addlegendimage{color=blue,solid,line width=0.8pt,mark=o,mark options={solid},mark size=3pt}
		\addlegendentry{Optimized filter w/ $M_{\text{OSF}}=3$};
		
%
%
		
	\end{axis}
\end{tikzpicture}}
	\end{minipage}\vfil
	\begin{minipage}{\columnwidth}
		\centering
			\resizebox{\columnwidth}{!}{\begin{tikzpicture}
	
	\begin{axis}[%
		width=\columnwidth,
		height=.6\columnwidth,
		at={(0in,0in)},
		scale only axis,
		xmin=-10,
		xmax=10,
		xlabel={SNR (dB)},
		xtick=data,
		xmajorgrids,
		ymin=0.2,
		ymax=1,
		yminorticks=true,
		ylabel={Probability of misdetection},
		ylabel style={font=\footnotesize},
		xlabel style={font=\footnotesize},
		ymajorgrids,
		yminorgrids,
		axis background/.style={fill=white},
		]
		
		\addplot [color=black,dotted,line width=0.8pt,mark=star,mark options={solid},mark size=3pt]
		table[row sep=crcr]{%
			-10   0.9858\\
			-7.5  0.9609\\
			-5    0.8937\\
			-2.5  0.7788\\
			0     0.6713\\
			2.5   0.5958\\
			5     0.5548\\
			7.5   0.5247\\
			10 	  0.5084\\				
		};	
		
		\addplot [color=red,dotted,line width=0.8pt,mark=triangle,mark options={solid},mark size=3pt]
		table[row sep=crcr]{%
			-10   0.9837\\
			-7.5  0.9534\\
			-5    0.8696\\
			-2.5  0.7373\\
			0     0.6142\\
			2.5   0.5272\\
			5     0.4814\\
			7.5   0.4480\\
			10 	  0.4330\\		
		};	
		
		\addplot [color=blue,dotted,line width=0.8pt,mark=o,mark options={solid},mark size=3pt]
		table[row sep=crcr]{%
			-10   0.9848\\
			-7.5  0.9500\\
			-5    0.8653\\
			-2.5  0.7277\\
			0     0.6101\\
			2.5   0.5193\\
			5     0.4716\\
			7.5   0.4419\\
			10 	  0.4199\\
		};	
		
		\addplot [color=black,dashed,line width=0.8pt,mark=star,mark options={solid},mark size=3pt]
		table[row sep=crcr]{%
			-10   0.9778\\
			-7.5  0.9304\\
			-5    0.8273\\
			-2.5  0.6972\\
			0    0.5883\\
			2.5  0.5102\\
			5    0.4676\\
			7.5  0.4357\\
			10   0.4187\\		
		};	
				
		\addplot [color=red,dashed,line width=0.8pt,mark=triangle,mark options={solid},mark size=3pt]
		table[row sep=crcr]{%
			-10   0.9728\\
			-7.5  0.9113\\
			-5    0.7912\\
			-2.5  0.6418\\
			0     0.5190\\
			2.5   0.4406\\
			5     0.393\\
			7.5   0.362\\
			10 	  0.3391\\			
		};	
		
		\addplot [color=blue,dashed,line width=0.8pt,mark=o,mark options={solid},mark size=3pt]
		table[row sep=crcr]{%
			-10   0.9731\\
			-7.5  0.9083\\
			-5  0.7740\\
			-2.5  0.6156\\
			0  0.5001\\
			2.5  0.4196\\
			5  0.3711\\
			7.5  0.3470\\
			10  0.3269\\
		};	
		
		\addplot [color=black,solid,line width=0.8pt,mark=star,mark options={solid},mark size=3pt]
		table[row sep=crcr]{%
			-10   0.9787\\
			-7.5  0.9424\\
			-5    0.8562\\
			-2.5  0.7461\\
			0     0.6448\\
			2.5   0.5697\\
			5     0.5248\\
			7.5   0.4961\\
			10    0.4898\\
		};	
		
		\addplot [color=red,solid,line width=0.8pt,mark=triangle,mark options={solid},mark size=3pt]
		table[row sep=crcr]{%
			-10   0.9733\\
			-7.5  0.9021\\
			-5    0.7579\\
			-2.5  0.6060\\
			0     0.4859\\
			2.5   0.4011\\
			5     0.3570\\
			7.5   0.3299\\
			10    0.3162\\
		};	
		
		\addplot [color=blue,solid,line width=0.8pt,mark=o,mark options={solid},mark size=3pt]
		table[row sep=crcr]{%
			-10   0.9666\\
			-7.5  0.8897\\
			-5    0.7368\\
			-2.5  0.5778\\
			0     0.4549\\
			2.5   0.3664\\
			5     0.3224\\
			7.5   0.2940\\
			10    0.2757\\			
		};	
	\end{axis}
	
\end{tikzpicture}
	\end{minipage}
	\caption{Performance comparisons of different pulse shaping filters by using UAD-DC, where the probability of false alarm is lower than $10^{-3}$, the RRC filter is with $\beta=0.4$, and the Gaussian filter is with $\sigma_\text{G}^2=0.49$.}\label{fig_cvxperformance}
\end{figure}

Finally, we present the user activity detection performance of different pulse shaping filters against different numbers of active users, ranging from 100 to 500 in Fig. \ref{fig_cvxperformanceact}, where the successful detection ratio is defined as the number of detected active users over the number of active users. It can be seen that UAD-DC with optimized filter achieves the best performance. For example, when there are 500 active users and $M_{\text{OSF}}=2$ (or $M_{\text{OSF}}=3$), UAD-DC with optimized filter can detect 6.63\% (or 15.83\%) more active users than UAD-DC with RRC filter. This verifies the effectiveness of the proposed filter optimization algorithm. Furthermore, we also present the performance of the conventional RA with RRC filter as our benchmark. When there are 500 active users and $M_{\text{OSF}}=2$ (or $M_{\text{OSF}}=3$), the proposed UAD-DC with optimized filter can detect 16.49\% (or 26.48\%) more active users than the benchmark. This verifies the effectiveness of the proposed UAD-DC with optimized filter in comparison to the conventional RA with RRC filter.
\begin{figure}[!htbp]
	\centering
	\begin{minipage}{\columnwidth}
			\resizebox{\columnwidth}{!}{\begin{tikzpicture} 
	\begin{axis}[%
		hide axis,
		xmin=10,
		xmax=40,
		ymin=0,
		ymax=0.4,
		legend columns=2,
		legend style={draw=white!15!black,legend cell align=left}
		]
		
		\addlegendimage{color=red,dash dot,line width=0.8pt,mark=triangle,mark options={solid},mark size=3pt}
		\addlegendentry{Conv. RA w/ RRC \& $M_{\text{OSF}}=2$};
		
		\addlegendimage{color=blue,dash dot,line width=0.8pt,mark=o,mark options={solid},mark size=3pt}
		\addlegendentry{Conv. RA w/ RRC \& $M_{\text{OSF}}=3$};
		
		\addlegendimage{color=red,dotted,line width=0.8pt,mark=triangle,mark options={solid},mark size=3pt}
		\addlegendentry{UAD-DC w/ Gaussian \& $M_{\text{OSF}}=2$};
		
		\addlegendimage{color=blue,dotted,line width=0.8pt,mark=o,mark options={solid},mark size=3pt}
		\addlegendentry{UAD-DC w/ Gaussian \& $M_{\text{OSF}}=3$};
		
		\addlegendimage{color=red,dashed,line width=1pt,mark=triangle,mark options={solid},mark size=3pt}
		\addlegendentry{UAD-DC w/ RRC \& $M_{\text{OSF}}=2$};
		
		\addlegendimage{color=blue,dashed,line width=1pt,mark=o,mark options={solid},mark size=3pt}
		\addlegendentry{UAD-DC w/ RRC \& $M_{\text{OSF}}=3$};
		
		\addlegendimage{color=red,solid,line width=1pt,mark=triangle,mark options={solid},mark size=3pt}
		\addlegendentry{UAD-DC w/ Optimized filter \& $M_{\text{OSF}}=2$};
		
		
		\addlegendimage{color=blue,solid,line width=1pt,mark=o,mark options={solid},mark size=3pt}
		\addlegendentry{UAD-DC w/ Optimized filter \& $M_{\text{OSF}}=3$};
		
		
	\end{axis}
\end{tikzpicture}}
	\end{minipage}\vfil
	\begin{minipage}{\columnwidth}
		\centering
			\resizebox{\columnwidth}{!}{\begin{tikzpicture}
	
	\begin{axis}[%
		width=\columnwidth,
		height=.6\columnwidth,
		at={(0in,0in)},
		scale only axis,
		xmin=100,
		xmax=500,
		xlabel style={font=\footnotesize},
		ylabel style={font=\footnotesize},
		xlabel={Number of active users},
		xtick=data,
		xmajorgrids,
		ymin=0.2,
		ymax=1,
		yminorticks=true,
		ylabel style={at={(axis description cs:-0.06,0.5)}},
		ylabel={Successful detection ratio},
		ymajorgrids,
		yminorgrids,
		axis background/.style={fill=white},
		]
		
		\addplot [color=red,dashed,line width=0.8pt,mark=triangle,mark options={solid},mark size=3pt]
		table[row sep=crcr]{%
			100   0.9067\\
			150   0.8260\\
			200   0.7365\\
			250   0.6975\\
			300   0.6609\\
			350   0.5861\\
			400   0.515\\ 
			450   0.4382\\
			500   0.3829\\		
		};	
		
		\addplot [color=blue,dashed,line width=0.8pt,mark=o,mark options={solid},mark size=3pt]
		table[row sep=crcr]{%
			100   0.9110\\
			150   0.8327\\
			200   0.7570\\
			250   0.7112\\
			300   0.6731\\
			350   0.6\\
			400   0.535\\ 
			450   0.465\\
			500   0.405\\ 
		};	
		
		\addplot [color=red,solid,line width=0.8pt,mark=triangle,mark options={solid},mark size=3pt]
		table[row sep=crcr]{%
			100   0.9277\\
			150   0.8502\\
			200   0.7890\\
			250   0.7339\\
			300   0.6838\\
			350   0.610\\
			400   0.54\\ 
			450   0.4686\\
			500   0.4083\\
		};	
		
		\addplot [color=blue,solid,line width=0.8pt,mark=o,mark options={solid},mark size=3pt]
		table[row sep=crcr]{%
			100   0.9370\\
			150   0.8698\\
			200   0.8078\\
			250   0.7580\\
			300   0.7243\\
			350   0.655\\ 
			400   0.59\\ 
			450   0.5345\\
			500   0.4691\\		
		};	
		
		\addplot [color=red,dash dot,line width=0.8pt,mark=triangle,mark options={solid},mark size=3pt]
		table[row sep=crcr]{%
			100   0.8243\\
			150   0.7327\\
			200   0.6525\\
			250   0.5863\\
			300   0.5563\\
			350   0.4726\\
			400   0.44\\
			450   0.3850\\
			500   0.3505\\	
		};	
		
		\addplot [color=blue,dash dot,line width=0.8pt,mark=o,mark options={solid},mark size=3pt]
		table[row sep=crcr]{%
			100   0.8540\\
			150   0.7589\\
			200   0.6702\\
			250   0.6017\\
			300   0.5788\\
			350   0.4946\\  
			400   0.46\\
			450   0.4070\\
			500   0.3709\\			
		};	
		
		\addplot [color=red,dotted,line width=0.8pt,mark=triangle,mark options={solid},mark size=3pt]
		table[row sep=crcr]{%
			100   0.8640\\
			150   0.755\\ 
			200   0.6668\\
			250   0.6127\\
			300   0.5670\\
			350   0.485\\ 
			400   0.4219\\ 
			450   0.3727\\
			500   0.3168\\	
		};	
		
		\addplot [color=blue,dotted,line width=0.8pt,mark=o,mark options={solid},mark size=3pt]
		table[row sep=crcr]{%
			100   0.8860\\
			150   0.77\\
			200   0.6878\\
			250   0.6229\\
			300   0.5801\\
			350   0.5014\\
			400   0.439\\
			450   0.3857\\
			500   0.3357\\	
		};	
	\end{axis}
	
\end{tikzpicture}
	\end{minipage}
	\caption{Performance comparisons of different pulse shaping filters by using UAD-DC, where the probability of false alarm is lower than $10^{-3}$, the RRC filter is with $\beta=0.4$, and the Gaussian filter is with $\sigma_\text{G}^2=0.49$, and SNR = 10 dB.}\label{fig_cvxperformanceact}
\end{figure}
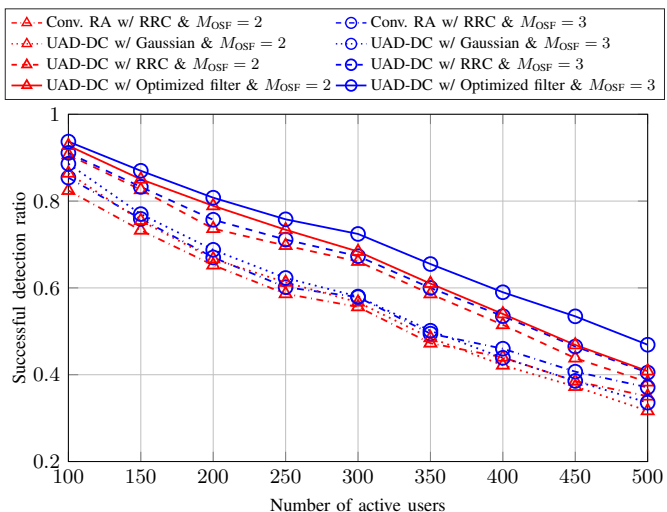

\section{Conclusions}
In this work, we proposed a sliding-window-based UAD-DC algorithm for asynchronous massive random access systems. In each window, the time delays of active users were estimated by the EM method iteratively. Based on the estimated time delays, a modified Turbo-CS-MMV algorithm was presented for detecting the active users with considering the noise correlations due to oversampling. Furthermore, a pulse shaping filter optimization algorithm was proposed to improve the performance of user activity detection with the objective of minimizing the BCRB under the constraint of limited spectral bandwidth. Numerical results demonstrated the efficacy of the proposed algorithm with the optimized filter in terms of the probability of misdetection and the NMSE of channel matrix.

The main complexity of the proposed UAD-DC algorithm lies in the matrix inversion in the modified Turbo-CS-MMV algorithm and the greedy search in the delay-calibration algorithm. Exploring advanced techniques to reduce the computational cost while maintaining the same performance will be an interesting topic to pursue in our future research.


\bibliographystyle{IEEEtran}
\bibliography{ref}

\begin{thebibliography}{10}
\providecommand{\url}[1]{#1}
\csname url@samestyle\endcsname
\providecommand{\newblock}{\relax}
\providecommand{\bibinfo}[2]{#2}
\providecommand{\BIBentrySTDinterwordspacing}{\spaceskip=0pt\relax}
\providecommand{\BIBentryALTinterwordstretchfactor}{4}
\providecommand{\BIBentryALTinterwordspacing}{\spaceskip=\fontdimen2\font plus
\BIBentryALTinterwordstretchfactor\fontdimen3\font minus
  \fontdimen4\font\relax}
\providecommand{\BIBforeignlanguage}[2]{{%
\expandafter\ifx\csname l@#1\endcsname\relax
\typeout{** WARNING: IEEEtran.bst: No hyphenation pattern has been}%
\typeout{** loaded for the language `#1'. Using the pattern for}%
\typeout{** the default language instead.}%
\else
\language=\csname l@#1\endcsname
\fi
#2}}
\providecommand{\BIBdecl}{\relax}
\BIBdecl

\bibitem{10207000}
Z.~Shao and X.~Yuan, ``Delay-calibrated user activity detection for
  asynchronous massive random access,'' in \emph{IEEE International Symposium
  on Information Theory (ISIT)}, 2023, pp. 2165--2170.

\bibitem{6714496}
L.~D. Xu, W.~He, and S.~Li, ``Internet of things in industries: {A} survey,''
  \emph{{IEEE} Trans. Ind. Informat.}, vol.~10, no.~4, pp. 2233--2243, 2014.

\bibitem{6488244}
Z.~Pang, Q.~Chen, J.~Tian, L.~Zheng, and E.~Dubrova, ``Ecosystem analysis in
  the design of open platform-based in-home healthcare terminals towards the
  internet-of-things,'' in \emph{15th International Conference on Advanced
  Communications Technology}, 2013, pp. 529--534.

\bibitem{ATZORI20102787}
L.~Atzori, A.~Iera, and G.~Morabito, ``The internet of things: A survey,''
  \emph{Computer Networks}, vol.~54, no.~15, pp. 2787--2805, 2010.

\bibitem{9205230}
X.~Chen, D.~W.~K. Ng, W.~Yu, E.~G. Larsson, N.~Al-Dhahir, and R.~Schober,
  ``Massive access for {5G} and beyond,'' \emph{{IEEE} J. Sel. Areas Commun.},
  vol.~39, no.~3, pp. 615--637, 2021.

\bibitem{6916986}
F.~Ghavimi and H.-H. Chen, ``{M2M} communications in {3GPP} {LTE/LTE-A}
  networks: Architectures, service requirements, challenges, and
  applications,'' \emph{{IEEE} Commun. Surveys Tuts.}, vol.~17, no.~2, pp.
  525--549, 2015.

\bibitem{7852531}
X.~Chen, T.-Y. Chen, and D.~Guo, ``Capacity of {G}aussian many-access
  channels,'' \emph{{IEEE} Trans. Inf. Theory}, vol.~63, no.~6, pp. 3516--3539,
  2017.

\bibitem{6525600}
M.~Hasan, E.~Hossain, and D.~Niyato, ``Random access for machine-to-machine
  communication in {LTE}-advanced networks: issues and approaches,''
  \emph{{IEEE} Commun. Mag.}, vol.~51, no.~6, pp. 86--93, 2013.

\bibitem{7565189}
C.~Bockelmann, N.~Pratas, H.~Nikopour, K.~Au, T.~Svensson, C.~Stefanovic,
  P.~Popovski, and A.~Dekorsy, ``Massive machine-type communications in 5{G}:
  physical and {MAC}-layer solutions,'' \emph{{IEEE} Commun. Mag.}, vol.~54,
  no.~9, pp. 59--65, 2016.

\bibitem{3gpp}
3GPP, ``Evolved universal terrestrial radio access ({E-UTRA}); physical
  channels and modulation,'' {3rd Generation Partnership Project (3GPP)},
  Technical Specification (TS) 36.211, Dec. 2016.

\bibitem{3gpp3}
------, ``{5G};{NR};physical channels and modulation,'' {3rd Generation
  Partnership Project (3GPP)}, Technical Specification (TS) 38.211, Oct. 2021.

\bibitem{4155680}
E.~Casini, R.~De~Gaudenzi, and O.~Del Rio~Herrero, ``Contention resolution
  diversity slotted {ALOHA} ({CRDSA}): An enhanced random access scheme for
  satellite access packet networks,'' \emph{{IEEE} Trans. Wireless Commun.},
  vol.~6, no.~4, pp. 1408--1419, 2007.

\bibitem{6847724}
R.~De~Gaudenzi, O.~del Río~Herrero, G.~Acar, and E.~Garrido~Barrabés,
  ``Asynchronous contention resolution diversity {ALOHA}: Making {CRDSA} truly
  asynchronous,'' \emph{{IEEE} Trans. Wireless Commun.}, vol.~13, no.~11, pp.
  6193--6206, 2014.

\bibitem{7302046}
E.~Paolini, G.~Liva, and M.~Chiani, ``Coded slotted {ALOHA}: A graph-based
  method for uncoordinated multiple access,'' \emph{{IEEE} Trans. Inf. Theory},
  vol.~61, no.~12, pp. 6815--6832, 2015.

\bibitem{8057803}
F.~Clazzer, C.~Kissling, and M.~Marchese, ``Enhancing contention resolution
  {ALOHA} using combining techniques,'' \emph{{IEEE} Trans. Commun.}, vol.~66,
  no.~6, pp. 2576--2587, 2018.

\bibitem{5963225}
C.~Kissling, ``Performance enhancements for asynchronous random access
  protocols over satellite,'' in \emph{IEEE International Conference on
  Communications (ICC)}, 2011, pp. 1--6.

\bibitem{905024}
C.~Pateros, ``Novel direct sequence spread spectrum multiple access
  technique,'' in \emph{21st Century Military Communications. Architectures and
  Technologies for Information Superiority}, vol.~2, 2000, pp. 564--568.

\bibitem{6324672}
O.~Del Rio~Herrero and R.~De~Gaudenzi, ``High efficiency satellite multiple
  access scheme for machine-to-machine communications,'' \emph{{IEEE} Trans.
  Aerosp. Electron. Syst.}, vol.~48, no.~4, pp. 2961--2989, 2012.

\bibitem{8454392}
L.~Liu, E.~G. Larsson, W.~Yu, P.~Popovski, C.~Stefanovic, and E.~de~Carvalho,
  ``Sparse signal processing for grant-free massive connectivity: A future
  paradigm for random access protocols in the internet of things,''
  \emph{{IEEE} Signal Process. Mag.}, vol.~35, no.~5, pp. 88--99, 2018.

\bibitem{9261952}
J.~Kim, G.~Lee, S.~Kim, T.~Taleb, S.~Choi, and S.~Bahk, ``Two-step random
  access for 5g system: Latest trends and challenges,'' \emph{{IEEE} Netw.},
  vol.~35, no.~1, pp. 273--279, 2021.

\bibitem{9413870}
L.~Liu and Y.-F. Liu, ``An efficient algorithm for device detection and channel
  estimation in asynchronous {IOT} systems,'' in \emph{IEEE International
  Conference on Acoustics, Speech and Signal Processing (ICASSP)}, 2021, pp.
  4815--4819.

\bibitem{9691883}
Z.~Wang, Y.-F. Liu, and L.~Liu, ``Covariance-based joint device activity and
  delay detection in asynchronous mmtc,'' \emph{{IEEE} Signal Process. Lett.},
  vol.~29, pp. 538--542, 2022.

\bibitem{9390399}
W.~Zhu, M.~Tao, X.~Yuan, and Y.~Guan, ``Deep-learned approximate message
  passing for asynchronous massive connectivity,'' \emph{{IEEE} Trans. Wireless
  Commun.}, vol.~20, no.~8, pp. 5434--5448, 2021.

\bibitem{10375263}
Z.~Zhang, Y.~Chi, Q.~Guo, Y.~Li, G.~Song, and C.~Huang, ``Asynchronous
  grant-free random access: Receiver design with partially uni-directional
  message passing and interference suppression analysis,'' \emph{{IEEE}
  Internet Things J.}, vol.~11, no.~9, pp. 15\,416--15\,433, 2024.

\bibitem{9714456}
R.~B. Di~Renna and R.~C. de~Lamare, ``Joint channel estimation, activity
  detection and data decoding based on dynamic message-scheduling strategies
  for mmtc,'' \emph{{IEEE} Trans. Commun.}, vol.~70, no.~4, pp. 2464--2479,
  2022.

\bibitem{9372306}
R.~B.~D. Renna and R.~C. de~Lamare, ``Dynamic message scheduling based on
  activity-aware residual belief propagation for asynchronous mmtc,''
  \emph{{IEEE} Wireless Commun. Lett.}, vol.~10, no.~6, pp. 1290--1294, 2021.

\bibitem{8006984}
Y.~Polyanskiy, ``A perspective on massive random-access,'' in \emph{IEEE
  International Symposium on Information Theory}, 2017, pp. 2523--2527.

\bibitem{9153051}
V.~K. Amalladinne, J.-F. Chamberland, and K.~R. Narayanan, ``A coded compressed
  sensing scheme for unsourced multiple access,'' \emph{{IEEE} Trans. Inf.
  Theory}, vol.~66, no.~10, pp. 6509--6533, 2020.

\bibitem{9374476}
A.~Fengler, S.~Haghighatshoar, P.~Jung, and G.~Caire, ``Non-{Bayesian} activity
  detection, large-scale fading coefficient estimation, and unsourced random
  access with a massive {MIMO} receiver,'' \emph{{IEEE} Trans. Inf. Theory},
  vol.~67, no.~5, pp. 2925--2951, 2021.

\bibitem{9178409}
V.~Shyianov, F.~Bellili, A.~Mezghani, and E.~Hossain, ``Massive unsourced
  random access based on uncoupled compressive sensing: Another blessing of
  massive {MIMO},'' \emph{{IEEE} J. Sel. Areas Commun.}, vol.~39, no.~3, pp.
  820--834, 2021.

\bibitem{9978081}
W.~Wang, J.~You, S.~Liang, W.~Han, and B.~Bai, ``Slotted concatenated coding
  scheme for asynchronous uplink unsourced random access with a massive mimo
  receiver,'' in \emph{{IEEE} 33rd Annual International Symposium on Personal,
  Indoor and Mobile Radio Communications ({PIMRC})}, 2022, pp. 246--252.

\bibitem{li2023graphbased}
T.~Li, Y.~Wu, W.~Zhang, X.-G. Xia, and C.~Xiao, ``A graph-based collision
  resolution scheme for asynchronous unsourced random access,'' \emph{arXiv
  preprint arXiv:2305.13753}, 2023.

\bibitem{7912330}
Z.~Xue, J.~Ma, and X.~Yuan, ``Denoising-based turbo compressed sensing,''
  \emph{IEEE Access}, vol.~5, pp. 7193--7204, 2017.

\bibitem{9354165}
Z.~Shao, L.~T.~N. Landau, and R.~C. de~Lamare, ``Dynamic oversampling for 1-bit
  {ADCs} in large-scale multiple-antenna systems,'' \emph{{IEEE} Trans.
  Commun.}, vol.~69, no.~5, pp. 3423--3435, 2021.

\bibitem{boyd_vandenberghe_2004}
S.~Boyd and L.~Vandenberghe, \emph{Convex Optimization}.\hskip 1em plus 0.5em
  minus 0.4em\relax Cambridge University Press, 2004.

\bibitem{kay}
S.~M. Kay, \emph{Fundamentals of Statistical Signal Processing: Estimation
  Theory}.\hskip 1em plus 0.5em minus 0.4em\relax USA: Prentice-Hall, Inc.,
  1993.

\bibitem{em}
A.~P. Dempster, N.~M. Laird, and D.~B. Rubin, ``Maximum likelihood from
  incomplete data via the {EM} algorithm,'' \emph{Journal of the Royal
  Statistical Society: Series B (Methodological)}, vol.~39, no.~1, pp. 1--22,
  1977.

\bibitem{6556987}
J.~P. Vila and P.~Schniter, ``Expectation-maximization gaussian-mixture
  approximate message passing,'' \emph{{IEEE} Trans. Signal Process.}, vol.~61,
  no.~19, pp. 4658--4672, 2013.

\bibitem{8713501}
S.~Rangan, P.~Schniter, and A.~K. Fletcher, ``Vector approximate message
  passing,'' \emph{{IEEE} Trans. Inf. Theory}, vol.~65, no.~10, pp. 6664--6684,
  2019.

\bibitem{539767}
C.~Berrou and A.~Glavieux, ``Near optimum error correcting coding and decoding:
  turbo-codes,'' \emph{{IEEE} Trans. Commun.}, vol.~44, no.~10, pp. 1261--1271,
  1996.

\bibitem{6710599}
A.~Krishnamoorthy and D.~Menon, ``Matrix inversion using cholesky
  decomposition,'' in \emph{Signal Processing: Algorithms, Architectures,
  Arrangements, and Applications (SPA)}, 2013, pp. 70--72.

\bibitem{Trees}
K.~L.~B. H.~L.~{Van Trees} and Z.~{Tian}, \emph{Detection Estimation and
  Modulation Theory, Part I: Detection, Estimation, and Filtering Theory,
  Detection Estimation and Modulation Theory}.\hskip 1em plus 0.5em minus
  0.4em\relax WILEY, 2013.

\bibitem{wcp}
T.~Rappaport, \emph{Wireless Communications: Principles and Practice},
  2nd~ed.\hskip 1em plus 0.5em minus 0.4em\relax Pearson, 2001.

\bibitem{cvx}
M.~Grant and S.~Boyd, ``{CVX}: Matlab software for disciplined convex
  programming, version 2.1,'' \url{http://cvxr.com/cvx}, Mar. 2014.

\bibitem{574313}
S.-P. Wu, S.~Boyd, and L.~Vandenberghe, ``{FIR} filter design via semidefinite
  programming and spectral factorization,'' in \emph{Proceedings of 35th IEEE
  Conference on Decision and Control}, vol.~1, 1996, pp. 271--276.

\bibitem{4119357}
Y.~Wen, W.~Huang, and Z.~Zhang, ``{CAZAC} sequence and its application in {LTE}
  random access,'' in \emph{IEEE Information Theory Workshop}, 2006, pp.
  544--547.

\end{thebibliography}

\end{document}